\newtheorem{theorem}{Theorem}
\newtheorem{proposition}{Proposition}
\newtheorem{lemma}{Lemma}
\newtheorem{remark}{Remark}
\newenvironment{proof-sketch}{\noindent \textit{Sketch of the Proof:}}{\hfill$\Box$}
\DeclareMathAlphabet{\eurm}{U}{eur}{m}{n}
\newcommand{\E}[2][]{{\mathbb{E}_{#1}} \left[#2\right]}
\begin{document}

\title{Intermittent Communication}
\author{Mostafa~Khoshnevisan,~\IEEEmembership{Student Member,~IEEE,}
        and~J.~Nicholas~Laneman,~\IEEEmembership{Senior Member,~IEEE}
\thanks{This work was supported in part by NSF grants CCF05-46618, CCF11-17365, and CPS12-39222.}
\thanks{This paper was presented in part at ISIT 2012, Allerton 2012, and ITA 2013.}
\thanks{Mostafa Khoshnevisan and J. Nicholas Laneman are with the Department of Electrical Engineering, University of Notre Dame, Notre Dame, IN, 46556 USA, (e-mails: \{mkhoshne, jnl\}@nd.edu).} 
}

%
\maketitle

\begin{abstract}
We formulate a model for intermittent communication that can capture bursty transmissions or a sporadically available channel, where in either case the receiver does not know a priori when the transmissions will occur. Focusing on the point-to-point case, we develop a decoding structure, decoding from pattern detection, and its achievable rate for such communication scenarios. Decoding from pattern detection first detects the locations of codeword symbols and then uses them to decode. We introduce the concept of partial divergence and study some of its properties in order to obtain stronger achievability results. As the system becomes more intermittent, the achievable rates decrease due to the additional uncertainty about the positions of the codeword symbols at the decoder. Additionally, we provide upper bounds on the capacity of binary noiseless intermittent communication with the help of a genie-aided encoder and decoder. The upper bounds imply a tradeoff between the capacity and the intermittency rate of the communication system, even if the receive window scales linearly with the codeword length.
\end{abstract}

\begin{IEEEkeywords}
Asynchronous communication, bursty communication, divergence, intermittent communication, typicality test
\end{IEEEkeywords}

\section{Introduction}

\IEEEPARstart{C}{ommunication} systems are traditionally analyzed assuming contiguous transmission of encoded symbols through the channel. However, in many practical applications such an assumption may not be appropriate, and transmitting a codeword can be intermittent due to lack of synchronization, shortage of transmission energy, or burstiness of the system. The challenge is that the receiver  may not explicitly know whether a given output symbol of the channel is the result of sending a symbol of the codeword or is simply a noise symbol containing no information about the message. This paper provides one model, called \textit{intermittent communication} for non-contiguous transmission of codewords in such settings.

If the intermittent process is considered to be part of the channel behavior, then intermittent communication models a sporadically available channel in which at some times a symbol from the codeword is sent, and at other times the receiver observes only noise. The model can be interpreted as an insertion channel in which some number of noise symbols are inserted between the codeword symbols. As another application, if the intermittent process is considered to be part of the transmitter, then we say that the transmitter is intermittent. Practical examples include energy harvesting systems in which the transmitter harvests energy usually from a natural source and uses it for transmission. Assuming that there is a special input that can be transmitted with zero energy, the transmitter sends the symbols of the codeword if there is enough energy for transmission, and sends the zero-cost symbol otherwise.

\subsection{Related Work}
\label{related}

Conceptually, two sets of related work are bit-synchronous-frame-asynchronous and bit-asynchronous communication. The former corresponds to contiguous transmission of codeword symbols in which the receiver observes noise before and after transmission, and the latter corresponds to a channel model in which some number of  symbols are inserted at the output of the channel, or some of the input symbols of the channel are deleted. A key assumption of these communication models is that the receiver does not know a priori the positions of codeword / noise / deleted symbols, capturing certain kinds of asynchronism. Generally, this asynchronism makes the task of the decoder more difficult since, in addition to the uncertainty about the message of interest, there is also uncertainty about the positions of the symbols.

An information theoretic model for bit-synchronous-frame-asynchronous communication is developed in~\cite{asynch-wornell-1, asynch-wornell, asynch-per-unit-tse, asynch-polyanskiy}  with a single block transmission of length $k$ that starts at a random time $v$ unknown to the receiver, within an exponentially large window $n=e^{\alpha k}$ known to the receiver, where $\alpha$ is called the asynchronism exponent. In this model, the transmission is contiguous; once it begins, the whole codeword is transmitted, and the receiver observes noise only both before and after transmission. Originally, the model appears in~\cite{frame-wornell} with the goal being to locate a sync pattern at the output of a memoryless channel using a sequential detector without communicating a message to the receiver. The same authors extend the framework for modeling asynchronous communication in~\cite{asynch-wornell-1, asynch-wornell}, where communication rate is defined with respect to the decoding delay, and give bounds on the capacity of the system.

Capacity per unit cost for asynchronous communication is studied in~\cite{asynch-per-unit-tse}. The capacity of asynchronous communication with a simpler notion of communication rate, i.e., with respect to the codeword length, is a special case of the result in~\cite{asynch-per-unit-tse}. In~\cite{asynch-polyanskiy}, it is shown that if the decoder is required to both decode the message and locate the codeword exactly, the capacity remains unchanged, and can be universally achieved. Also, it is shown that the channel dispersion is not affected because of asynchronism. In a recent work~\cite{energy-sample-asynch}, the authors study the capacity per unit cost for asynchronous communication if the receiver is constrained to sample only a fraction of the channel outputs, and they show that the capacity remains unchanged under this constraint.

In~\cite{isit-asynch}, a slotted asynchronous channel model is investigated and fundamental limits of asynchronous communication in terms of missed detection and false alarm error exponents are studied. A more general result in the context of single-message unequal error protection (UEP) is given in~\cite{isit-unequal-error}.

Although we formally introduce the system model of intermittent communication in Section~\ref{sec-system-model}, we would like to briefly compare asynchronous communication~\cite{frame-wornell} with intermittent communication. First, as contrasted in Figure~\ref{fig-asynch-inter-compare}, the transmission of codeword symbols is contiguous in asynchronous communication, whereas it is bursty in intermittent communication. Second, the receive window $n$ is exponential with respect to the codeword length $k$ in asynchronous communication, but as we will see it is linear in intermittent communication. Finally, in both models, we assume that the receiver does not know a priori the positions of codeword symbols.

\begin{figure}
        \centering
        \begin{subfigure}[t]{.8\textwidth}
                \centering
                \includegraphics[width=\textwidth]{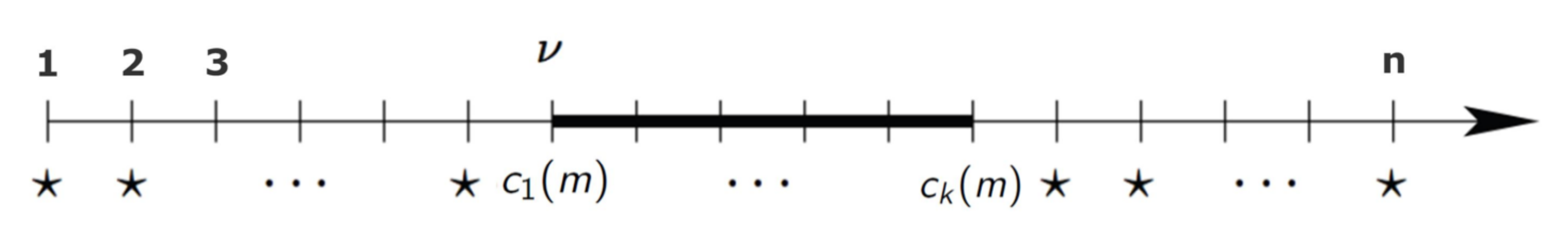}
                \caption{Asynchronous communication~\cite{frame-wornell}: $n=e^{\alpha k}$.}
                \label{fig-asynch}
        \end{subfigure}

        \begin{subfigure}[t]{.8\textwidth}
                \centering
                \includegraphics[width=\textwidth]{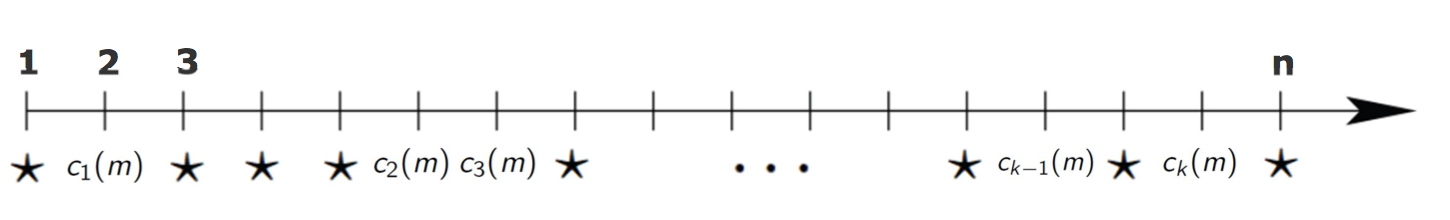}
                \caption{Intermittent communication: $n=\alpha k$.}
                \label{fig-intermittent}
        \end{subfigure}
        \caption{Comparison of the channel input sequence of asynchronous communication~\cite{frame-wornell} with intermittent communication.}\label{fig-asynch-inter-compare}
\end{figure}

Non-contiguous transmission of codeword symbols in bit-asynchronous communication can be described by the following insertion channel: after the $i^{th}$ symbol of the codeword, $N_i$ fixed noise symbols $\star$ are inserted, where $N_i$, $i=1,2,...,k$ are random variables, possibly independent and identically distributed (iid). The resulting sequence passes through a discrete memoryless channel, and the receiver should decode the message based on the output of the channel without knowing the positions of the codeword symbols. As will see later, if $N \ge k$ is the random variable denoting the total number of received symbols, the intermittency rate $N/k \xrightarrow{\text{ } p \text{ }} \alpha$ as $k \to \infty$ will be an important parameter of the system. To the best of our knowledge, this insertion channel model has not been studied before.

A more general class of channels with synchronization errors is studied in~\cite{dobrushin-general} in which every transmitted symbol is independently replaced with a random number of the same symbol, possibly empty string to model a deletion event. Dobrushin~\cite{dobrushin-general} proved the following characterization of the capacity for such iid synchronization error channels.

\begin{theorem}
\label{dobrushin}
(\cite{dobrushin-general}): For iid synchronization error channels, let $X^k:=(X_1,X_2,...,X_k)$ denote the channel input sequence of length $k$, and $Y^N:=(Y_1,Y_2,...,Y_N)$ denote the corresponding output sequence at the decoder, where length $N$ is a random variable determined by the channel realization. The channel capacity is
\begin{equation} \label{e: dobrushin}
C=\lim_{k \to \infty} \max_{P_{X^k}} \frac{1}{k} \mathbb{I}(X^k;Y^N),
\end{equation}
\end{theorem}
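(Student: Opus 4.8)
The strategy is to prove the matching converse and achievability bounds on $C$, writing $f(k):=\max_{P_{X^k}}\mathbb{I}(X^k;Y^N)$ for the $k$-letter mutual information; I would first check that $\lim_{k\to\infty}\frac{1}{k}f(k)$ exists. The key point is that $f$ is subadditive, $f(k_1+k_2)\le f(k_1)+f(k_2)$: writing the input as $X^{k_1+k_2}=(U,V)$ with $U=X^{k_1}$, the iid nature of the synchronization errors makes the output the concatenation $Y^N=Y_U\Vert Y_V$, where $Y_U$ is produced from $U$ alone, $Y_V$ from $V$ alone, and $Y_U$ and $Y_V$ are conditionally independent given $(U,V)$; since concatenation is a deterministic map, $\mathbb{I}(U,V;Y^N)\le\mathbb{I}(U,V;Y_U,Y_V)=H(Y_U,Y_V)-H(Y_U\mid U)-H(Y_V\mid V)\le\mathbb{I}(U;Y_U)+\mathbb{I}(V;Y_V)\le f(k_1)+f(k_2)$, and maximizing the left side over $P_{X^{k_1+k_2}}$ gives the claim. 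Fekete's lemma then yields $\lim_{k\to\infty}\frac{1}{k}f(k)=\inf_{k}\frac{1}{k}f(k)$, so the limit in the statement is well defined.

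For the converse, fix a length-$k$ code with $M=e^{kR}$ equiprobable messages and average error probability $P_e$. Since $N$ is a sum of $k$ iid per-symbol output lengths it concentrates around its mean by the weak law of large numbers, so $Y^N$ is a genuine discrete random variable of almost surely bounded length and the standard information inequalities apply. With message $W$ and the Markov chain $W\to X^k\to Y^N$, Fano's inequality gives $kR=H(W)=\mathbb{I}(W;Y^N)+H(W\mid Y^N)\le\mathbb{I}(X^k;Y^N)+1+P_e kR\le f(k)+1+P_e kR$, so dividing by $k$ and letting $k\to\infty$ with $P_e\to0$ yields $R\le\lim_{k\to\infty}\frac{1}{k}f(k)$.

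For achievability I would treat length-$k$ input blocks as super-symbols and code over $n$ of them. Fix $k$, a pilot length $\ell$, and $P_{X^k}$ with $\mathbb{I}(X^k;Y^N)\ge f(k)-\epsilon k$, and draw a random codebook whose codewords consist of $n$ iid $P_{X^k}$-blocks, each preceded by a fixed pilot string of length $\ell$. The decoder sees one long string $Y^{\bar N}$ and does not know the $n-1$ internal block boundaries; this unknown parsing is the only place where the synchronization structure genuinely bites, and it is the main obstacle, since a naive union bound over the roughly $\binom{\bar N}{n-1}$ candidate parsings is hopelessly lossy. I would instead argue that, using the pilots, the decoder locates each boundary correctly except with some fixed probability, so that by the law of large numbers over the $n$ blocks only a fraction $\delta(\ell)$ of blocks are affected (each mislocation corrupting only its two neighbors); conditioned on correctly located boundaries the inner super-channel is memoryless over blocks, so a joint-typicality decoder recovers all but a $\delta(\ell)$-fraction of the super-symbols at any inner rate below $\frac{1}{k}\mathbb{I}(X^k;Y^N)$; and an outer code over the (huge) block alphabet, of rate $1-o(1)$ as $\delta(\ell)\to0$, repairs the rest. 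Sending $n\to\infty$ first gives, for each $k$ and $\ell$, an achievable overall rate about $\frac{k}{k+\ell}(1-o(1))(\frac{1}{k}f(k)-\epsilon)$; then sending $k\to\infty$ with $\ell=\ell(k)\to\infty$ and $\ell(k)/k\to0$ (so $\delta(\ell(k))\to0$) and $\epsilon\to0$ gives $C\ge\inf_{k}\frac{1}{k}f(k)=\lim_{k\to\infty}\frac{1}{k}f(k)$.

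In short, the subadditivity and Fano steps are routine once one notes that $N$ concentrates; the real content is the achievability, and within it the one hard point is showing that the decoder's ignorance of the internal block boundaries costs at most a vanishing fraction of the rate. This is what forces the concatenated construction --- detectable pilots of sublinear length together with an outer code absorbing the few residual synchronization slips --- in place of a single-shot typicality decoder, and it is also where the need to let $k\to\infty$ (so that both the pilot overhead and the slip rate vanish) enters.
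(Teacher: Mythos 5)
The paper itself does not prove this theorem --- it cites Dobrushin's 1967 paper and uses the statement as imported background --- so there is no in-paper proof to compare against. Assessing your proposal on its own merits: the subadditivity step and the converse are correct, but the achievability step has a genuine gap.

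Your subadditivity argument is sound. Writing $X^{k_1+k_2}=(U,V)$, the iid per-symbol nature of the channel does make the output the concatenation $Y_U\Vert Y_V$ with $Y_U$ a function of $U$ and its own private randomness, and $Y_V$ of $V$ and its own; the data-processing step through the boundary-forgetting concatenation map, together with $H(Y_U,Y_V\mid U,V)=H(Y_U\mid U)+H(Y_V\mid V)$ and $H(Y_U,Y_V)\le H(Y_U)+H(Y_V)$, gives $f(k_1+k_2)\le f(k_1)+f(k_2)$, and Fekete applies since $0\le f(k)\le k\log|\mathcal{X}|$. The Fano converse is also fine: $H(W\mid Y^N)$ is well-defined simply because $Y^N$ is discrete (it takes values in the countable set $\bigcup_n\mathcal{Y}^n$), so the concentration of $N$ is not actually needed there.

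The gap is in the achievability, precisely at the point you flag as ``the one hard point'' and then assert rather than prove: that a pilot of length $\ell$ lets the decoder locate each block boundary up to an error of probability $\delta(\ell)\to0$, uniformly over the data, with mislocations that stay local. For a \emph{general} iid synchronization-error channel this is not obvious. You would need to (i) exhibit a pilot pattern whose channel image is distinguishable from the channel image of a random data block, (ii) show that the resulting per-boundary detection error probability decays in $\ell$ and is controllable even though the data surrounding the pilot is random and could spuriously mimic it, and (iii) show that boundary errors do not cascade (your parenthetical ``corrupting only its two neighbors'' presumes a non-sequential, resynchronizing detector that you never specify). None of these are given, and (i)--(ii) are channel-dependent: for a deletion channel a long run is a natural marker, but Dobrushin's class allows each symbol to be independently replaced by an arbitrary random string, and there the existence of a universally detectable marker requires an argument. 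Dobrushin's own route is different --- he proves information stability of the channel (convergence in probability of the normalized information density), after which a Feinstein-type random-coding argument with a single-shot typicality decoder closes the achievability without any pilots or concatenation. Your concatenated construction is a plausible alternative strategy and is in the spirit of later work on synchronization channels, but as written the pilot-detection claim is the load-bearing step and it is left unproved.
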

where $\mathbb{I}(X;Y)$ denotes the average mutual information~\cite{Cover}. There are two difficulties related to computing the capacity through this characterization. First, it is challenging to compute the mutual information because of the memory inherent in the joint distribution of the input and output sequences. Second, the optimization over all the input distributions is computationally involved. A single-letter characterization for the capacity of the general class of synchronization error channels is still an open problem, even though there are many papers deriving bounds on the capacity of the insertion / deletion channels~\cite{deletion-insertion-survey, gallager-synch-channel, diggavi-deletion-achv, improved-lower-drinea, direct-lower-drinea, diggavi-outer-deletion, duman-outer-deletion1, duman-outer-deletion2, sticky, insertion}.

Focusing on the lower bounds, Gallager~\cite{gallager-synch-channel} considers a channel model with substitution and insertion / deletion errors and derives a lower bound on the channel capacity. In~\cite{diggavi-deletion-achv}, codebooks from first-order Markov chains are used to improve the achievability results for deletion channels. The intuition is that it is helpful to put some memory in the codewords if the channel has some inherent memory. Improved achievability results are obtained in~\cite{improved-lower-drinea, direct-lower-drinea}. in~\cite{direct-lower-drinea}, authors provide a direct lower bound on the information capacity given by~\eqref{e: dobrushin} for channels with iid deletions and duplications with an input distribution following a symmetric, first-order Markov chain.

Upper bounds  on the capacity of iid insertion / deletion channels are obtained in~\cite{duman-outer-deletion1, duman-outer-deletion2, diggavi-outer-deletion}, in which the capacity of an auxiliary channel obtained by a genie-aided decoder (and encoder) with access to side-information about the insertion / deletion process is numerically evaluated. Although our insertion channel model is different, we can apply some of these ideas and techniques to upper bound the capacity of intermittent communication.

\subsection{Summary of Contributions}
\label{summary of cont}

After introducing a model for intermittent communication in Section~\ref{sec-system-model}, we develop a coding theorem for achievable rates to lower bound the capacity in Section~\ref{sec-achv}. Toward this end, we introduce the notion of partial divergence and its properties. By using decoding from pattern detection, we obtain achievable rates that are also valid for arbitrary intermittent processes.

Focusing on the binary-input binary-output noiseless case, we obtain upper bounds on the capacity of intermittent communication in Section~\ref{sec-upper} by providing the encoder and the decoder with various amounts of side-information, and calculating or upper bounding the capacity of this genie-aided system. Although the gap between the achievable rates and upper bounds is fairly large, especially for large values of intermittency rate, the results suggest that linear scaling of the receive window with respect to the codeword length considered in the system model is relevant since the upper bounds imply a tradeoff between the capacity and the intermittency rate.

\section{System Model and Foundations}
\label{sec-system-model}

We consider a communication scenario in which a transmitter communicates a single message $m \in \{1,2,...,e^{kR}=M\}$ to a receiver over a discrete memoryless channel (DMC) with probability transition matrix $W$ and input and output alphabets $\mathcal{X}$ and $\mathcal{Y}$, respectively. Let $C_W$ denote the capacity of the DMC. Also, let $\star \in \mathcal{X}$ denote the ``noise symbol'' or the ``idle symbol'', which corresponds to the input of the channel when the transmitter is ``silent''. The transmitter encodes the message as a codeword $c^k(m)$ of length $k$,  which is the input sequence of intermittent process shown in Figure~\ref{fig-system-model}.

\begin{figure}[t]
\centerline{\scalebox{.53}{\includegraphics{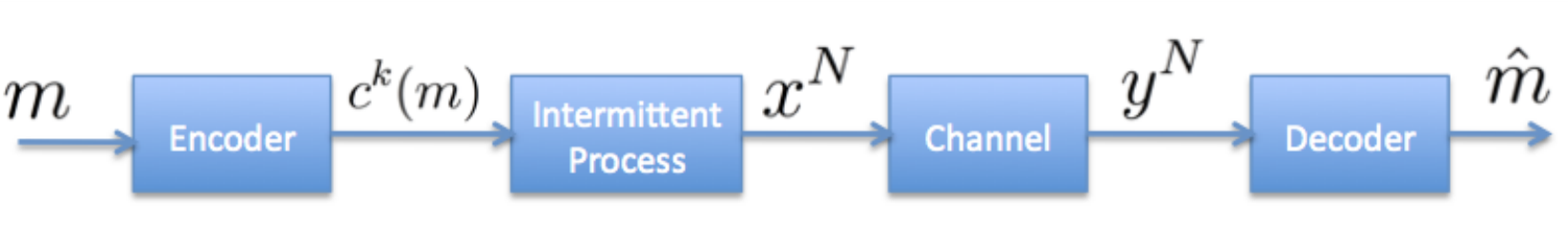}}}
\caption{System model for intermittent communication.}
\label{fig-system-model}
\end{figure}

The intermittent process captures the burstiness of the channel or the transmitter and can be described as follows: After the $i^{th}$ symbol from the codeword, $N_i$ noise symbols $\star$ are inserted, where $N_i$'s are iid geometric random variables with mean $\alpha-1$, where $\alpha \ge 1$ is  the \textit{intermittency rate}. Note that $i=0,1,...,k$, where $N_0$ denotes the number of noise symbols inserted before the first symbol of the codeword. As we will see later, if $N \ge k$ is the random variable denoting the total number of received symbols, the intermittency rate $N/k \xrightarrow{\text{ } p \text{ }} \alpha$ as $k \to \infty$ will be an important parameter of the system. In fact, the larger the value of $\alpha$, the larger the receive window, and therefore, the more intermittent the system becomes with more uncertainty about the positions of the codeword symbols; if $\alpha=1$, the system is not intermittent and corresponds to contiguous communication. We call this scenario \textit{intermittent communication} and denote it by the tuple $(\mathcal{X},\mathcal{Y},W, \star,\alpha)$.

This model corresponds to an iid insertion channel model in which at each time slot a codeword symbol is sent with probability $p_t:=1/\alpha$ and the noise symbol $\star$ is sent with probability $1-p_t$. The output of the intermittent process then goes through a DMC. At the decoder, there are $N$ symbols, where $N$ is a random variable with distribution:
\begin{equation} \label{e-neg-binom}
P(N=n)=\binom{n}{k} p_t^{k+1} (1-p_t)^{n-k}, n \ge k,
\end{equation}
with $\E N=\alpha k +\alpha -1$, and we have
\begin{equation} \label{e-conv-prob}
\frac{N}{k}=\frac{k+N_0+N_1+N_2+...+N_k}{k} \xrightarrow{\text{ } p \text{ }} 1+\mathbb{E}(N_0)=\alpha, \text{ as } k \to \infty.
\end{equation}

Therefore, the receive window $N$ scales linearly with the codeword length $k$, as opposed to the exponential scaling in asynchronous communication~\cite{asynch-wornell-1, asynch-wornell, asynch-per-unit-tse, asynch-polyanskiy}. The intermittent communication model represents bursty communication in which either the transmitter or the channel is bursty. In a bursty communication scenario, the receiver usually does not know the realization of the bursts. Therefore, we assume that the receiver does not know the positions of the codeword symbols, making the decoder's task more involved. However, we assume that the receiver knows the codeword length $k$ and the intermittency rate $\alpha$.

Denoting the decoded message by $\hat{m}$, which is a function of the random sequence $Y^N$, and defining a code as in~\cite{Cover}, we say that rate $R$ is achievable if there exists a sequence of length $k$ codes of size $M=e^{kR}$ with average probability of error $\frac{1}{M}\sum_{m=1}^{M}\mathbb{P} (\hat{m} \ne m) \to 0$ as $k \to \infty$. Note that the code rate is defined as $\log M /k$. The capacity is the supremum of all the achievable rates. Rate region $(R,\alpha)$ is said to be achievable if the rate $R$ is achievable for the corresponding scenario with the intermittency rate $\alpha$.

At this point, we summarize several notations that are used throughout the sequel. We use $o(\cdot)$ and $\text{poly}(\cdot)$ to denote quantities that grow strictly slower than their arguments and are polynomial in their arguments, respectively. By $X \sim P(x)$, we mean $X$ is distributed according to $P$. For convenience, we define $W_\star(\cdot):=W(\cdot|X=\star)$, and more generally, $W_x(\cdot):=W(\cdot|X=x)$. In this paper, we use the convention that $\binom{n}{k}=0$ if $k <0$ or $n <k$, and the entropy $H(P)=-\infty$ if $P$ is not a probability mass function, i.e., one of its elements is negative or the sum of its elements is larger than one. We also use the conventional definitions $x^+:=\max\{x,0\}$, and if $0 \le \rho \le 1$, then $\bar{\rho}:=1-\rho$. Additionally, $\doteq$ denotes an equality in exponential sense as $k \to \infty$, i.e., $\lim_{k \to \infty} \frac{1}{k} \log (\cdot)$ of both sides are equal. Unless otherwise stated, we use capital letters for random variables, and small letters for their realizations.

It can be seen that the result of Theorem~\ref{dobrushin} for iid synchronization error channels applies to intermittent communication model, and therefore, the capacity equals $\lim_{k \to \infty} \max_{P_{C^k}} \frac{1}{k} \mathbb{I}(C^k;Y^N)$. Denoting the binary entropy function by $h(p):=-p\log p-(1-p) \log (1-p)$, we have the following theorem.

\begin{theorem}
\label{theorem-simple-lower}
For intermittent communication $(\mathcal{X},\mathcal{Y},W, \star,\alpha)$, rates less than \\ $R_1:=\left(C_W-\alpha h(1/\alpha)\right)^+$ are achievable.
\end{theorem}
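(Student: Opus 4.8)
The plan is to use the capacity characterization stated just above the theorem, namely $C=\lim_{k\to\infty}\max_{P_{C^k}}\frac1k\,\mathbb{I}(C^k;Y^N)$ (valid here because intermittent communication is an i.i.d.\ synchronization error channel, so Theorem~\ref{dobrushin} applies), and to lower bound the right-hand side by a single convenient input distribution together with a genie argument. Let $P^*$ be a capacity-achieving input distribution of the DMC $W$, and take the codeword symbols $C_1,\dots,C_k$ to be i.i.d.\ $\sim P^*$. Introduce the \emph{insertion pattern} $S:=(N_0,N_1,\dots,N_k)$ recording how many $\star$ symbols are inserted before the first codeword symbol and after each of the following ones. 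The two facts that drive the proof are that $S$ is independent of $C^k$ (the intermittent process does not look at the codeword values), and that conditioned on $S$ the decoder knows exactly which $k$ of the $N=k+\sum_i N_i$ received symbols are codeword symbols.

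Then I would write, using the chain rule for mutual information twice,
\[
\mathbb{I}(C^k;Y^N)=\mathbb{I}(C^k;Y^N,S)-\mathbb{I}(C^k;S\mid Y^N)=\mathbb{I}(C^k;Y^N\mid S)-\mathbb{I}(C^k;S\mid Y^N),
\]
the last step because $\mathbb{I}(C^k;S)=0$. For the first term, fix a pattern $S=s$ with codeword positions $t_1<\dots<t_k$: given $s$, the outputs at the inserted positions are i.i.d.\ $\sim W(\cdot\mid\star)$ and independent of $C^k$, so they can be dropped from the mutual information, leaving $k$ parallel independent uses of $W$; hence $\mathbb{I}(C^k;Y^N\mid S=s)=\sum_{j=1}^{k}\mathbb{I}(C_j;Y_{t_j})=k\,C_W$ for every $s$, and therefore $\mathbb{I}(C^k;Y^N\mid S)=k\,C_W$. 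For the second term, $\mathbb{I}(C^k;S\mid Y^N)\le H(S)$, and since each $N_i$ is a geometric random variable of parameter $1/\alpha$ — whose entropy equals $h(1/\alpha)/(1/\alpha)=\alpha\,h(1/\alpha)$ — we get $H(S)\le (k+1)\,\alpha\,h(1/\alpha)$.

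Combining gives $\frac1k\,\mathbb{I}(C^k;Y^N)\ge C_W-\frac{k+1}{k}\,\alpha\,h(1/\alpha)$, so letting $k\to\infty$ yields $C\ge C_W-\alpha\,h(1/\alpha)$; since mutual information is nonnegative we also have $C\ge 0$, hence $C\ge\big(C_W-\alpha\,h(1/\alpha)\big)^+=R_1$, establishing the theorem.

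I expect the only nontrivial point to be the identity $\mathbb{I}(C^k;Y^N\mid S)=k\,C_W$: one must check carefully that $S$ is genuinely independent of the codeword and that the channel outputs generated by the inserted $\star$ symbols are, conditioned on $S$, independent of $C^k$ and hence removable without changing the mutual information, so that what is left is literally $k$ memoryless uses of $W$ fed the i.i.d.-$P^*$ symbols $C_1,\dots,C_k$. A secondary bookkeeping choice is what side information to hand the genie: revealing the full gap vector $S$ (equivalently, the set of codeword positions together with $N$) is what makes the parallel-channel reduction exact, while its cost $H(S)$ is exactly the entropy of the i.i.d.\ geometric pattern, which produces the $\alpha\,h(1/\alpha)$ penalty. (Alternatively one could run a direct random-coding argument with joint-typicality decoding over all $\binom{N-1}{k-1}$ placements of the codeword, but the mutual-information route above is the shortest once Theorem~\ref{dobrushin} is available.)
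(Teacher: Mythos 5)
Your proof is correct and follows essentially the same route as the paper's: both invoke Dobrushin's characterization, reveal the insertion pattern $(N_0,\dots,N_k)$ to a genie to reduce the first term to $k$ independent uses of $W$, and bound the penalty by $H$ of the geometric pattern, giving the $(k+1)\alpha h(1/\alpha)$ term. The only cosmetic differences are your explicit intermediate step $\mathbb{I}(C^k;Y^N,S)=\mathbb{I}(C^k;Y^N\mid S)$ (using $\mathbb{I}(C^k;S)=0$) and your explicit handling of the $(\cdot)^+$ via nonnegativity of mutual information, which the paper leaves implicit.
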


\begin{proof}
We show that $R_1$ is a lower bound for the capacity of intermittent communication by lower bounding the mutual information. Let vector $T^{k+1}:=(N_0,N_1,...,N_k)$ denote the number of noise insertions in between the codeword symbols, where the $N_i$'s are iid geometric random variables with mean $\alpha-1$: $$P(N_i=n_i)=(1-\frac{1}{\alpha})^{n_i}\frac{1}{\alpha}, n_i \ge 0.$$
Now, we have
\begin{align}
\mathbb{I}(C^k;Y^N)&= \mathbb{I}(C^k;Y^N,T^{k+1})- \mathbb{I}(C^k;T^{k+1}|Y^N) \notag \\
& =  \mathbb{I}(X^k;Y^k)- \mathbb{I}(C^k;T^{k+1}|Y^N) \label{eq-thm-simple-1} \\
& \ge k \mathbb{I}(X;Y) -H(T^{k+1}) \label{eq-thm-simple-2} \\
&= k \mathbb{I}(X;Y) -(k+1)H(N_0)  \label{eq-thm-simple-3} \\
&= k \mathbb{I}(X;Y) -(k+1)\alpha h(\frac{1}{\alpha}),  \label{eq-thm-simple-4}
\end{align}
where~\eqref{eq-thm-simple-1} follows from $\mathbb{I}(C^k;Y^N,T^{k+1})=\mathbb{I}(C^k;T^{k+1})+\mathbb{I}(C^k;Y^N|T^{k+1})=\mathbb{I}(C^k;Y^N|T^{k+1})=\mathbb{I}(X^k;Y^k)$, where the first equality follows from the chain rule, the second equality follows from the fact that the codeword $C^k$ is independent of the insertion process $T^{k+1}$, and the last equality follows from the fact that conditioned on the positions of noise symbols $T^{k+1}$, the mutual information between $C^k$ and $Y^N$ equals the mutual information between input and output sequences of the DMC without considering the noise insertions; where~\eqref{eq-thm-simple-2} follows  by considering iid symbols of codewords and by the fact that conditioning cannot increase the entropy; and~\eqref{eq-thm-simple-3} and~\eqref{eq-thm-simple-4} follow from the fact that $N_i$'s are iid geometric random variables. Finally, the result follows after dividing both sides by $k$ and considering the capacity achieving input distribution of the DMC.
\end{proof}

Although the lower bound on the capacity of intermittent communication in Theorem~\ref{theorem-simple-lower} is valid for the specific intermittent process described above, our achievability results in Section~\ref{sec-achv} apply to an arbitrary insertion process as long as $N/k \xrightarrow{\text{ } p \text{ }} \alpha$ as $k \to \infty$.

We will make frequent use of the notations and results from the method of types, a powerful technique in large deviation theory developed by Csisz\'{a}r and K\"{o}rner~\cite{csiszar}. Let $\mathcal{P}^\mathcal{X}$ denote the set of distributions over the finite alphabet $\mathcal{X}$.  The empirical distribution (or type) of a sequence $x^n \in \mathcal{X}^n$ is denoted by $\hat{P}_{x^n}$. A sequence $x^n$ is said to have a type $P$ if $\hat{P}_{x^n}=P$. The set of all sequences that have type $P$ is denoted $T_P^n$, or more simply $T_P$. Joint empirical distributions are denoted similarly. The set of sequences $y^n$ that have a conditional type $W$ given $x^n$ is denoted $T_W(x^n)$. A sequence $x^n \in \mathcal{X}^n$ is called $P$-typical with constant $\mu$, denoted $x^n \in T_{[P]_\mu}$, if
\begin{equation} \notag
| \hat{P}_{x^n}(x) - P(x)| \le \mu \text{ for every } x \in \mathcal{X}.
\end{equation}
Similarly, a sequence $y^n \in \mathcal{Y}^n$ is called $W$-typical conditioned on $x^n \in \mathcal{X}^n$ with constant $\mu$, denoted $y^n \in T_{[W]_\mu}(x^n)$, if
\begin{equation} \notag
| \hat{P}_{x^n,y^n}(x,y) - \hat{P}_{x^n}(x) W(y|x) | \le \mu \text{ for every } (x,y) \in  \mathcal{X} \times  \mathcal{Y}.
\end{equation}

For $P, P^\prime \in \mathcal{P}^{\mathcal{X}}$ and $W, W^\prime \in \mathcal{P}^{\mathcal{Y}|\mathcal{X}}$, the Kullback-Leibler divergence between $P$ and $P^\prime$ is defined as
\begin{equation} \notag
D(P\|P^\prime):=\sum_{x \in \mathcal{X}} P(x) \log \frac{P(x)}{P^\prime(x)},
\end{equation}
and the conditional information divergence between $W$ and $W^\prime$ conditioned on $P$ is defined as
\begin{equation} \notag
D(W\|W^\prime|P):=\sum_{x \in \mathcal{X}} P(x) \sum_{y \in \mathcal{Y}} W(y|x) \log \frac{W(y|x)}{W^\prime(y|x)}.
\end{equation}
The average mutual information between $X \sim P$ and $Y \sim PW$ and coupled via $P_{Y|X}=W$ is denoted by $\mathbb{I}(P,W)$. With these definitions, we now state the following lemmas, which are used throughout the paper.

\begin{lemma} \label{fact-typ1}
(\cite[Lemma 1.2.6]{csiszar}): If $X^n$ is an iid sequence according to $P^\prime$, then the probability that it has a type $P$ is bounded by
\begin{equation} \notag
\frac{1}{(n+1)^{|\mathcal{X}|}}e^{-nD(P\|P^\prime)} \le \mathbb{P}(X^n \in T_P) \le e^{-nD(P\|P^\prime)}.
\end{equation}
Also, if the input $x^n \in \mathcal{X}^n$ to a memoryless channel $W^\prime \in \mathcal{P}^{\mathcal{Y}|\mathcal{X}}$ has type $P$, then the probability that the observed channel output sequence $Y^n$ has a conditional type $W$ given $x^n$ is bounded by
\begin{equation} \notag
\frac{1}{(n+1)^{|\mathcal{X}\|\mathcal{Y}|}}e^{-nD(W\|W^\prime|P)} \le \mathbb{P}(Y^n \in T_W(x^n)) \le e^{-nD(W\|W^\prime|P)}.
\end{equation}
\end{lemma}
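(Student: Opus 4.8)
The plan is to prove both inequalities by the standard method-of-types computation, reducing the conditional (channel) statement to the unconditional one by splitting the coordinates according to the input symbol. For the first part, the starting point is that every $x^n \in T_P$ has exactly the same probability under the iid law $P'$: since $x^n$ has type $P$, exactly $nP(x)$ of its coordinates equal $x$, so $P'^{n}(x^n) = \prod_{x \in \mathcal{X}} P'(x)^{nP(x)} = e^{n \sum_x P(x) \log P'(x)} = e^{-n(H(P)+D(P\|P'))}$. Hence $\mathbb{P}(X^n \in T_P) = |T_P|\, e^{-n(H(P)+D(P\|P'))}$, and it suffices to show $\frac{1}{(n+1)^{|\mathcal{X}|}} e^{nH(P)} \le |T_P| \le e^{nH(P)}$. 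The upper bound is immediate from the same identity with $P'$ replaced by $P$: $1 \ge \mathbb{P}_P(X^n \in T_P) = |T_P|\, e^{-nH(P)}$.

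The lower bound on $|T_P|$ is the one delicate point. The route I would take is to show that, under the iid law $P$, the class $T_P$ is the \emph{most probable} among all type classes, i.e. $\mathbb{P}_P(X^n \in T_P) \ge \mathbb{P}_P(X^n \in T_Q)$ for every type $Q$ of length-$n$ sequences. Writing the ratio $\mathbb{P}_P(X^n \in T_Q)/\mathbb{P}_P(X^n \in T_P) = \big(|T_Q|/|T_P|\big)\prod_x P(x)^{n(Q(x)-P(x))}$ and using $|T_Q|/|T_P| = \prod_x (nP(x))!/(nQ(x))!$ together with the elementary estimate $m!/\ell! \le m^{\,m-\ell}$ (applied coordinatewise, with the obvious modification when $P(x) < Q(x)$), the product telescopes to $\prod_x n^{\,n(P(x)-Q(x))} = 1$, so the ratio is at most $1$. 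Since there are at most $(n+1)^{|\mathcal{X}|}$ distinct types, $1 = \sum_Q \mathbb{P}_P(X^n \in T_Q) \le (n+1)^{|\mathcal{X}|}\,\mathbb{P}_P(X^n \in T_P) = (n+1)^{|\mathcal{X}|}\,|T_P|\,e^{-nH(P)}$, which is the desired bound. Combining the two cardinality bounds with the exact probability expression gives the first pair of inequalities.

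For the conditional statement I would partition the $n$ coordinates into the sub-blocks $B_a := \{\, i : x_i = a \,\}$, $a \in \mathcal{X}$, which has size $n\hat P_{x^n}(a) = nP(a)$. Conditioned on the fixed input $x^n$, the memoryless channel makes the output restricted to $B_a$ an iid length-$nP(a)$ sequence drawn from $W'(\cdot|a)$, and the blocks are mutually independent; moreover $Y^n \in T_W(x^n)$ holds exactly when, for every $a$, the output subsequence indexed by $B_a$ has type $W(\cdot|a)$. Applying the already-proved first part to each block and multiplying gives
\begin{equation} \notag
\prod_{a \in \mathcal{X}} \frac{e^{-nP(a)\, D(W(\cdot|a)\|W'(\cdot|a))}}{(nP(a)+1)^{|\mathcal{Y}|}} \;\le\; \mathbb{P}(Y^n \in T_W(x^n)) \;\le\; \prod_{a \in \mathcal{X}} e^{-nP(a)\, D(W(\cdot|a)\|W'(\cdot|a))}.
\end{equation}
Since $\sum_{a} nP(a)\, D(W(\cdot|a)\|W'(\cdot|a)) = n\, D(W\|W'|P)$ by definition, and since each of the $|\mathcal{X}|$ factors satisfies $(nP(a)+1)^{|\mathcal{Y}|} \le (n+1)^{|\mathcal{Y}|}$ (empty blocks contributing a factor $1$), these collapse to the stated bounds.

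The only real obstacle is the lower bound on $|T_P|$, namely establishing that $T_P$ maximizes the $P^n$-probability over all type classes, which hinges on the coordinatewise factorial estimate; everything else is exact computation and bookkeeping. In fact, since this lemma is exactly Lemma~1.2.6 of~\cite{csiszar}, one may simply invoke it, but the argument above is the self-contained route.
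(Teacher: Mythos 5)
The paper does not prove this lemma; it is stated with the attribution ``(\cite[Lemma 1.2.6]{csiszar})'' and invoked as a known result from Csisz\'{a}r and K\"{o}rner. Your self-contained derivation is the standard method-of-types proof and is correct: the exact probability $P'^n(x^n) = e^{-n(H(P)+D(P\|P'))}$ for $x^n \in T_P$ together with the cardinality sandwich $\frac{1}{(n+1)^{|\mathcal{X}|}}e^{nH(P)} \le |T_P| \le e^{nH(P)}$ (upper bound from $\mathbb{P}_P(T_P) \le 1$, lower bound via the maximality of $T_P$ under $P^n$ among type classes plus the polynomial count of types) gives the unconditional inequalities, and the block decomposition by input letter cleanly reduces the conditional-type statement to the unconditional one, with $\prod_a (nP(a)+1)^{|\mathcal{Y}|} \le (n+1)^{|\mathcal{X}||\mathcal{Y}|}$ yielding the correct polynomial prefactor. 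The factorial estimate $m!/\ell! \le m^{m-\ell}$ and the telescoping cancellation are exactly the classical argument; edge cases ($P(a)=0$ blocks, $P(x)=0$ entries in the type-class comparison) are handled or harmless as you indicate. Since the paper offers no proof of its own, there is nothing to contrast; you have simply supplied the textbook proof the citation points to.
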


\begin{lemma} \label{fact-typ2}
(\cite[Lemma 1.2.12]{csiszar}): If $X^n$ is an iid sequence according to $P$, then
\begin{equation} \notag
\mathbb{P}(X^n \in T_{[P]_\mu}) \ge 1- \frac{|\mathcal{X}|}{4n\mu^2}.
\end{equation}
Also, if the input $x^n \in \mathcal{X}^n$ to a memoryless channel $W \in \mathcal{P}^{\mathcal{Y}|\mathcal{X}}$, and $Y^n$ is the output, then
\begin{equation} \notag
\mathbb{P}(Y^n \in T_{[W]_\mu}(x^n)) \ge 1- \frac{|\mathcal{X}\|\mathcal{Y}|}{4n\mu^2},
\end{equation}
and here the terms subtracted from $1$ could be replaced by exponentially small terms $2|\mathcal{X}| e^{-2n\mu^2}$ and $2|\mathcal{X}\|\mathcal{Y}| e^{-2n\mu^2}$, respectively.
\end{lemma}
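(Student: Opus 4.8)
The plan is to establish all four inequalities — the two Chebyshev-type bounds and their two exponential refinements — by one uniform recipe. For each empirical frequency appearing in the definition of the typical set, I would write it as a normalized sum of bounded, independent indicator random variables; bound the probability that this sum strays from its mean by more than $\mu$ via a concentration inequality (Chebyshev's inequality for the $1/(4n\mu^2)$ bounds, Hoeffding's inequality for the $2e^{-2n\mu^2}$ bounds); and then close with a union bound over the finitely many letters $x \in \mathcal{X}$, respectively letter-pairs $(x,y) \in \mathcal{X}\times\mathcal{Y}$.

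For the first statement, fix $x \in \mathcal{X}$ and note that $\hat P_{X^n}(x) = \frac1n\sum_{i=1}^n \indicator{X_i = x}$ is an average of iid Bernoulli$(P(x))$ random variables, so its mean is $P(x)$ and its variance is $P(x)(1-P(x))/n \le 1/(4n)$. Chebyshev's inequality then gives $\mathbb{P}\bigl(|\hat P_{X^n}(x) - P(x)| > \mu\bigr) \le 1/(4n\mu^2)$, and summing over the $|\mathcal{X}|$ letters bounds $\mathbb{P}(X^n \notin T_{[P]_\mu})$ by $|\mathcal{X}|/(4n\mu^2)$, which is the claim. Replacing Chebyshev by Hoeffding's inequality for the $[0,1]$-valued summands $\indicator{X_i = x}$ turns each per-letter bound into $2e^{-2n\mu^2}$, hence the aggregate into $2|\mathcal{X}| e^{-2n\mu^2}$.

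For the second statement I would condition on the given input sequence $x^n$, under which $Y_1,\dots,Y_n$ are independent with $\mathbb{P}(Y_i = y) = W(y\mid x_i)$. For each pair $(x,y)$, the empirical frequency $\hat P_{x^n,Y^n}(x,y) = \frac1n\sum_{i=1}^n \indicator{x_i = x,\, Y_i = y}$ is then a normalized sum of independent $[0,1]$-valued summands — only the $n\hat P_{x^n}(x)$ indices with $x_i = x$ being nonzero — with conditional mean exactly $\hat P_{x^n}(x)W(y\mid x)$ and conditional variance $\frac{1}{n^2}\sum_{i:\,x_i=x} W(y\mid x)(1-W(y\mid x)) \le 1/(4n)$. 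Chebyshev together with a union bound over the $|\mathcal{X}\|\mathcal{Y}|$ pairs gives $\mathbb{P}\bigl(Y^n \notin T_{[W]_\mu}(x^n)\bigr) \le |\mathcal{X}\|\mathcal{Y}|/(4n\mu^2)$ for every $x^n$, and Hoeffding applied to the same summands yields the $2|\mathcal{X}\|\mathcal{Y}| e^{-2n\mu^2}$ refinement.

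The only step that calls for any care — and the only place this departs from the plain iid argument — is the conditioning in the second part: one must condition on $x^n$ before invoking independence, verify that the conditional mean of $\hat P_{x^n,Y^n}(x,y)$ equals $\hat P_{x^n}(x)W(y\mid x)$ uniformly in $x^n$ (so that the bound holds for an arbitrary, possibly atypical, input), and check that the variance/range estimate of $1/(4n)$ survives the fact that the summands are independent but not identically distributed. Once that bookkeeping is in place, the union bounds and the swap of Hoeffding for Chebyshev are entirely mechanical.
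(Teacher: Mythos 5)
Your proof is correct, and it is the standard argument (Chebyshev for the $1/(4n\mu^2)$ bounds, Hoeffding for the $2e^{-2n\mu^2}$ refinements, union bound over letters or letter pairs, with the conditional part handled exactly as you do by freezing $x^n$). The paper itself does not prove this lemma — it is cited directly from Csisz\'{a}r and K\"{o}rner — so there is no in-paper proof to compare against; your derivation matches the reference's approach.
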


Finally, we state a stronger version of the packing lemma~\cite[Lemma 3.1]{ElGamal-Kim} that will be useful in typicality decoding, and is proved in~\cite[Equations (24) and (25)]{asynch-wornell} based on the method of types.

\begin{lemma} \label{fact1}
Assume that $X^n$ is an iid sequence according to $P$, and $\tilde{Y}^n$ is an arbitrary sequence, where $\tilde{Y} \in \mathcal{P}^\mathcal{Y}$. If $\tilde{Y}^n$ and $X^n$ are independent, then
\begin{equation} \notag
\mathbb{P}(\tilde{Y}^n \in T_{[W]_\mu}(X^n)) \le \text{poly}(n)e^{-n(\mathbb{I}(P,W)-\epsilon)}
\end{equation}
for all $n$ sufficiently large, where $\epsilon > 0$ can be made arbitrarily small by choosing a small enough typicality parameter $\mu$.
\end{lemma}

It is important to note that the random sequence $\tilde{Y}^n$ in Lemma~\ref{fact1} does not have to be iid. In addition, $\tilde{Y}$ does not have to have the same distribution as $PW$.

\section{Achievability}
\label{sec-achv}

In this section, we obtain achievability results for intermittent communication based upon \textit{decoding from pattern detection}, which attempts to decode the transmitted codeword only if the selected outputs appear to be a pattern of codeword symbols. In order to analyze the probability of error this decoding structure, which gives a larger and more general achievable rate than the one in Theorem~\ref{theorem-simple-lower}, we use a generalization of Sanov's Theorem that leads to the notion of \textit{partial divergence} and its properties described in Section~\ref{subsec-partial-divergence}. Although the system model in Section~\ref{sec-system-model} assumes iid geometric insertions, the results of this section apply to a general intermittent process, as we point out in Remark~\ref{remark-achv-general}.

\subsection{Partial Divergence}
\label{subsec-partial-divergence}

We will see in Section~\ref{subsec-decode-algorithm} that a relevant function is $d_\rho(P\|Q)$, which we call partial divergence. In this section, we examine some of the interesting properties of partial divergence that provide insights about the achievable rates in Section~\ref{subsec-decode-algorithm}. Qualitatively speaking, partial divergence is the normalized exponent of the probability that a sequence with independent elements generated partially according to one distribution and partially according to another distribution has a specific type. This exponent is useful in characterizing a decoder's ability to distinguish a sequence obtained partially from the codewords and partially from the noise from a codeword sequence or a noise sequence. The following lemma is a generalization of Sanov's Theorem~\cite{csiszar}.

\begin{lemma} \label{lemma1}
Consider the  distributions $P,Q,Q^\prime \in \mathcal{P}^\mathcal{X}$ on a finite alphabet $\mathcal{X}$. A random sequence $X^k$ is generated as follows: $\rho k$ symbols iid according to $Q$ and $\bar{\rho} k$ symbols iid according to $Q^\prime$, where $0 \le \rho \le 1$. The normalized exponent of the probability that $X^k$ has type $P$ is
\begin{align}
d(P,Q,Q^\prime,\rho)&:=\lim_{k \to \infty} -\frac{1}{k} \log \mathbb{P}(X^k \in T_P)   \notag \\
&=\min_{\substack{P_1,P_2 \in \mathcal{P}^\mathcal{X}: \\ \rho P_1+\bar{\rho} P_2=P}} \rho D(P_1\|Q) +\bar{\rho} D(P_2\|Q^\prime).  \label{eq-partial-lemma}
\end{align}
\end{lemma}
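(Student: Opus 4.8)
The plan is to reduce the claim to the ordinary method of types (Lemma~\ref{fact-typ1}) by splitting $X^k$ into its two homogeneous blocks and optimizing over how the target type $P$ is apportioned between them. Write $X^k=(X^{(1)},X^{(2)})$, where $X^{(1)}$ consists of the $k_1:=\lfloor\rho k\rfloor$ symbols drawn iid $\sim Q$ and $X^{(2)}$ of the $k_2:=k-k_1$ symbols drawn iid $\sim Q'$, so $k_1/k\to\rho$ and $k_2/k\to\bar\rho$. Since the overall empirical distribution is the length-weighted average of the block types, $\{X^k\in T_P\}$ is the disjoint union, over all pairs $(P_1,P_2)$ with $P_1$ a $k_1$-type, $P_2$ a $k_2$-type and $\tfrac{k_1}{k}P_1+\tfrac{k_2}{k}P_2=P$, of the events $\{X^{(1)}\in T_{P_1}\}\cap\{X^{(2)}\in T_{P_2}\}$; by independence of the blocks,
\begin{equation} \notag
\mathbb{P}(X^k\in T_P)=\sum_{(P_1,P_2)}\mathbb{P}(X^{(1)}\in T_{P_1})\,\mathbb{P}(X^{(2)}\in T_{P_2}),
\end{equation}
the sum ranging over these compatible pairs. (Implicitly one restricts to the $k$ for which $P$ is a $k$-type, so that $T_P\ne\varnothing$; this is the usual convention.)

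Next I would invoke Lemma~\ref{fact-typ1} on each factor, $\mathbb{P}(X^{(1)}\in T_{P_1})\doteq e^{-k_1 D(P_1\|Q)}$ and $\mathbb{P}(X^{(2)}\in T_{P_2})\doteq e^{-k_2 D(P_2\|Q')}$ with $\text{poly}(k)$ multiplicative slack, and use the fact that the number of $k_i$-types, and hence of compatible pairs, is polynomial in $k$. Thus the displayed sum equals its largest term up to $\text{poly}(k)$ factors, and
\begin{equation} \notag
-\tfrac1k\log\mathbb{P}(X^k\in T_P)=\min_{(P_1,P_2)}\Big(\tfrac{k_1}{k}D(P_1\|Q)+\tfrac{k_2}{k}D(P_2\|Q')\Big)+o(1),
\end{equation}
the minimum being over the compatible type pairs.

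It then remains to pass to the limit, i.e., to show this discrete minimum converges to the continuous minimum in~\eqref{eq-partial-lemma} over all $P_1,P_2\in\mathcal{P}^\mathcal{X}$ with $\rho P_1+\bar\rho P_2=P$. One direction is free: any compatible type pair is a pair of probability vectors whose $(k_1/k,k_2/k)$-mixture is $P$, and since $(k_1/k,k_2/k)\to(\rho,\bar\rho)$ and $D(\cdot\|\cdot)$ is lower semicontinuous, $\liminf_k$ of the discrete minima is at least the continuous minimum. For the reverse, pick an optimal pair $(P_1^\ast,P_2^\ast)$ for the continuous problem --- the minimum is attained since the feasible set is a nonempty compact polytope and the objective is convex and lower semicontinuous, and if the optimal value is $+\infty$ then $\mathbb{P}(X^k\in T_P)=0$ for all large $k$ and there is nothing to prove --- and approximate it, for each large $k$, by type pairs $(P_1^{(k)},P_2^{(k)})$ of the prescribed lengths with $\tfrac{k_1}{k}P_1^{(k)}+\tfrac{k_2}{k}P_2^{(k)}=P$ exactly and $P_i^{(k)}\to P_i^\ast$. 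Continuity of $D(\cdot\|Q)$ and $D(\cdot\|Q')$ at $(P_1^\ast,P_2^\ast)$, where they are finite, then gives $\limsup_k$ of the discrete minima at most the continuous minimum. Combining the two inequalities and letting $k\to\infty$ finishes the proof; the edge cases $\rho\in\{0,1\}$ collapse the formula to $D(P\|Q')$ or $D(P\|Q)$ and are checked directly.

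The main obstacle is the \emph{realizability} step in the last paragraph: one must land on the aggregate type $P$ exactly while keeping both blocks legitimate types of lengths $k_1$ and $k_2$. I expect to handle it by rounding $k_1P_1^\ast$ to an integer count vector $a$ with $\sum_x a(x)=k_1$, setting $P_1^{(k)}:=a/k_1$ (within $O(1/k_1)$ of $P_1^\ast$), and taking the second-block counts to be the rounding of $kP$ minus $a$ componentwise; nonnegativity for large $k$ follows from $P_2^\ast\ge 0$ where $P_2^\ast(x)>0$ and, where $P_2^\ast(x)=0$, from the support constraint imposed by $D(P_2^\ast\|Q')<\infty$, while the $O(1)$ slack from rounding $kP$ is absorbed into the $o(1)$. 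An alternative, more abstract route avoids this bookkeeping entirely: apply Sanov's theorem to the pair of empirical measures $(\hat P_{X^{(1)}},\hat P_{X^{(2)}})$ and then the contraction principle to the continuous map $(P_1,P_2)\mapsto\rho P_1+\bar\rho P_2$.
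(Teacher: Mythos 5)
Your proof follows essentially the same route as the paper's: decompose the event $\{X^k\in T_P\}$ as a disjoint union over compatible block-type pairs, apply Lemma~\ref{fact-typ1} to each independent factor, and use that the polynomial number of type pairs makes the sum exponentially equal to its extremal term. The extra care you devote in the final paragraph—rounding to exact $k_i$-types and passing from the discrete to the continuous minimum via lower semicontinuity and a realizability argument—is a refinement that the paper leaves implicit when it treats $\rho k$ and $\bar\rho k$ as integers and optimizes directly over $\mathcal{P}^\mathcal{X}$ without comment.
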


\begin{proof}
With some abuse of notation, let $X_1^{\rho k}$ and $X_2^{\bar{\rho} k}$ be the sequence of symbols in $X^k$ that are iid according to $Q$, $Q^\prime$, respectively. If these sequences have types $P_1$ and $P_2$, respectively, then the whole sequence $X^k$ has type $\rho P_1+\bar{\rho} P_2$. Therefore, we have
\begin{align}
\mathbb{P}(X^k \in T_P) &=\mathbb{P}\left(\cup_{\substack{P_1,P_2 \in \mathcal{P}^\mathcal{X}: \\ \rho P_1+\bar{\rho} P_2=P}} \{X_1^{\rho k} \in T_{P_1}, X_2^{\bar{\rho} k} \in T_{P_2}\}\right) \label{e-proof-lemma-gen-0} \\
&=\sum_{\substack{P_1,P_2 \in \mathcal{P}^\mathcal{X}: \\ \rho P_1+\bar{\rho} P_2=P}}\mathbb{P}(X_1^{\rho k} \in T_{P_1}, X_2^{\bar{\rho} k} \in T_{P_2}) \label{e-proof-lemma-gen-1} \\
&\doteq \sum_{\substack{P_1,P_2 \in \mathcal{P}^\mathcal{X}: \\ \rho P_1+\bar{\rho} P_2=P}} \exp \{-k(\rho D(P_1\|Q) +\bar{\rho} D(P_2\|Q^\prime))\}, \label{e-proof-lemma-gen-2}\\
&\doteq \exp \left\{-k \min_{\substack{P_1,P_2 \in \mathcal{P}^\mathcal{X}: \\ \rho P_1+\bar{\rho} P_2=P}} \rho D(P_1\|Q) +\bar{\rho} D(P_2\|Q^\prime) \right\} \label{e-proof-lemma-gen-3}
\end{align}
where:~\eqref{e-proof-lemma-gen-1} follows from the disjointedness of the events in~\eqref{e-proof-lemma-gen-0} since a sequence has a unique type;~\eqref{e-proof-lemma-gen-2} follows from the independence of the events in~\eqref{e-proof-lemma-gen-1} and obtaining the probability of each of them according to Lemma~\ref{fact-typ1} to first order in the exponent; and~\eqref{e-proof-lemma-gen-3} follows from the fact that the number of different types is polynomial in the length of the sequence~\cite{csiszar}, which makes the total number of terms in the summation~\eqref{e-proof-lemma-gen-2} polynomial in $k$, and therefore, the exponent equals the largest exponent of the terms in the summation~\eqref{e-proof-lemma-gen-2}.
\end{proof}

Specializing Lemma~\ref{lemma1} for $Q^\prime=Q$ results in Lemma~\ref{fact-typ1}, and we have $d(P,Q,Q,\rho)=D(P\|Q)$. However, we will be interested in the special case of Lemma~\ref{lemma1} for which $Q^\prime=P$. In other words, we need to upper bound the probability that a sequence has a type $P$ if its elements are generated independently with a fraction $\rho$ according to $Q$ and the remaining fraction $\bar{\rho}$ according to $P$. For this case, we call $$d_\rho(P\|Q):=d(P,Q,P,\rho)$$ the partial divergence between $P$ and $Q$ with mismatch ratio $0 \le \rho \le 1$. Proposition~\ref{prop1} gives an explicit expression for the partial divergence by solving the optimization problem in~\eqref{eq-partial-lemma} for the special case of $Q^\prime=P$.

\begin{proposition} \label{prop1}
If $\mathcal{X}=\{1,2,...,|\mathcal{X}|\}$ and $P,Q \in \mathcal{P}^\mathcal{X}$, where $P:=(p_1,p_2,...,p_{|\mathcal{X}|})$ and $Q:=(q_1,q_2,...,q_{|\mathcal{X}|})$ and we assume that all values of the PMF $Q$ are nonzero, then the partial divergence can be written as
\begin{equation} \label{E: explicit d}
d_\rho(P\|Q)= D(P\|Q)-\sum_{j=1}^{|\mathcal{X}|} p_j \log(c^*+\frac{p_j}{q_j})+\rho \log c^* +h(\rho),
\end{equation}
where $c^*$ is a function of $\rho$, $P$, and $Q$, and can be uniquely determined from
\begin{equation} \label{E: const c}
c^* \sum_{j=1}^{|\mathcal{X}|} \frac{p_j q_j}{c^* q_j+p_j}=\rho.
\end{equation}
\end{proposition}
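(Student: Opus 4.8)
The plan is to solve the constrained minimization in~\eqref{eq-partial-lemma} for the special case $Q^\prime=P$ by Lagrange multipliers. We must minimize $f(P_1,P_2):=\rho D(P_1\|Q)+\bar{\rho} D(P_2\|P)$ over all pairs of PMFs $(P_1,P_2)$ subject to the affine constraint $\rho P_1+\bar{\rho}P_2=P$ (componentwise) together with the normalization constraints $\sum_j P_1(j)=\sum_j P_2(j)=1$. Since $D(\cdot\|Q)$ and $D(\cdot\|P)$ are convex, $f$ is convex and the affine feasible set is convex, so a stationary point of the Lagrangian is the global minimizer; this also explains why $c^*$ will be \emph{uniquely} determined. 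First I would write $P_2(j)=(p_j-\rho P_1(j))/\bar{\rho}$ to eliminate $P_2$, so the only free variables are $P_1(j)=:t_j$ with the single constraint $\sum_j t_j=1$ (the constraints $0\le t_j\le p_j/\rho$ are inactive at the optimum when all $q_j>0$, which I would note but not belabor). Then the objective becomes $g(t):=\rho\sum_j t_j\log\frac{t_j}{q_j}+\sum_j(p_j-\rho t_j)\log\frac{p_j-\rho t_j}{\bar\rho p_j}$.

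Next I would form the Lagrangian $g(t)-\lambda(\sum_j t_j-1)$ and set $\partial/\partial t_j=0$. Differentiating gives $\rho\log\frac{t_j}{q_j}+\rho-\rho\log\frac{p_j-\rho t_j}{\bar\rho p_j}-\rho=\lambda$, i.e.\ $\log\frac{t_j\,\bar\rho p_j}{q_j(p_j-\rho t_j)}=\lambda/\rho$. Writing $c^*$ for the resulting constant (after absorbing the $\bar\rho$ and sign conventions), this rearranges to $t_j=P_1^*(j)=\dfrac{p_j q_j}{c^* q_j+p_j}$ for a suitable single scalar $c^*$, and then $P_2^*(j)=\dfrac{p_j-\rho P_1^*(j)}{\bar\rho}=\dfrac{c^* p_j}{\bar\rho(c^* q_j+p_j)}$. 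Imposing $\sum_j P_1^*(j)=1$ then yields exactly~\eqref{E: const c}: $c^*\sum_j\frac{p_jq_j}{c^*q_j+p_j}=\rho$. I would observe that the left side of~\eqref{E: const c}, as a function of $c^*\ge 0$, is continuous, equals $0$ at $c^*=0$, is strictly increasing, and tends to $\sum_j p_j=1\ge\rho$ as $c^*\to\infty$, so a unique nonnegative root $c^*$ exists.

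Finally I would substitute $P_1^*,P_2^*$ back into $f$ and simplify to obtain~\eqref{E: explicit d}. The computation is: $\rho D(P_1^*\|Q)=\rho\sum_j P_1^*(j)\log\frac{P_1^*(j)}{q_j}=\rho\sum_j\frac{p_jq_j}{c^*q_j+p_j}\log\frac{p_j}{c^*q_j+p_j}$, and $\bar\rho D(P_2^*\|P)=\sum_j\frac{c^* p_j q_j}{c^*q_j+p_j}\log\frac{c^*}{c^*q_j+p_j}+\bar\rho\log\bar\rho^{-1}\cdot(\text{using }\sum P_2^*=1)$; adding these and using $\frac{p_jq_j}{c^*q_j+p_j}+\frac{c^*p_jq_j}{c^*q_j+p_j}\cdot\frac{1}{?}$ — more precisely grouping the $\log(c^*q_j+p_j)=\log q_j+\log(c^*+p_j/q_j)$ terms against $D(P\|Q)=\sum_j p_j\log(p_j/q_j)$, and collecting the $\log c^*$ terms (whose coefficient is $\rho$ by~\eqref{E: const c}) and the $\log\rho,\log\bar\rho$ terms (which assemble into $h(\rho)$), produces $D(P\|Q)-\sum_j p_j\log(c^*+p_j/q_j)+\rho\log c^*+h(\rho)$. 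The main obstacle is purely this bookkeeping: one has to carefully track which $\log(c^*q_j+p_j)$ terms cancel against $D(P\|Q)$, verify that the coefficient of $\log c^*$ collapses to $\rho$ via the defining equation~\eqref{E: const c}, and check that the leftover constants combine exactly into the binary entropy $h(\rho)=-\rho\log\rho-\bar\rho\log\bar\rho$; there is no conceptual difficulty once convexity guarantees the stationary point is the minimum.
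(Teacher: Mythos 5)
Your overall strategy---convexity of the objective, elimination of $P_2$, Lagrange stationarity in $P_1$, and the monotonicity argument establishing a unique root $c^*\in[0,\infty)$ of~\eqref{E: const c}---is exactly the KKT route the paper alludes to, and the convexity and uniqueness observations are correct. However, the closed forms you write for the optimizers are wrong, and this breaks the subsequent logic. Solving the stationarity equation
\begin{equation*}
\rho\log\frac{t_j}{q_j}-\rho\log\frac{p_j-\rho t_j}{\bar\rho\,p_j}=\lambda
\end{equation*}
gives $t_j=\dfrac{Kp_jq_j}{\bar\rho\,p_j+K\rho\,q_j}$ with $K:=e^{\lambda/\rho}$; setting $c^*:=K\rho/\bar\rho$ yields
\begin{equation*}
P_1^*(j)=\frac{c^*\,p_jq_j}{\rho\,(c^*q_j+p_j)},\qquad P_2^*(j)=\frac{p_j^2}{\bar\rho\,(c^*q_j+p_j)},
\end{equation*}
not $P_1^*(j)=\frac{p_jq_j}{c^*q_j+p_j}$ and $P_2^*(j)=\frac{c^*p_j}{\bar\rho(c^*q_j+p_j)}$ as you claim (you can see yours is inconsistent, since $\bar\rho P_2^*(j)=p_j-\rho P_1^*(j)$ forces the numerator of $P_2^*$ to involve $p_j^2$, not $c^*p_j$). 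This matters for the very next step: with your $P_1^*$, the normalization $\sum_j P_1^*(j)=1$ gives $\sum_j\frac{p_jq_j}{c^*q_j+p_j}=1$, which is \emph{not}~\eqref{E: const c} and is inconsistent with it unless $c^*=\rho$. With the corrected $P_1^*$, normalization is precisely $c^*\sum_j\frac{p_jq_j}{c^*q_j+p_j}=\rho$, as required.

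With the corrected optimizers the back-substitution actually closes cleanly and without the unresolved bookkeeping in your last paragraph: since $\rho P_1^*(j)+\bar\rho P_2^*(j)=p_j$, one gets
\begin{align*}
\rho D(P_1^*\|Q)+\bar\rho D(P_2^*\|P)
&=\rho\log c^*-\rho\log\rho-\bar\rho\log\bar\rho+\sum_j\frac{c^*p_jq_j+p_j^2}{c^*q_j+p_j}\bigl[\log p_j-\log(c^*q_j+p_j)\bigr]\\
&=\rho\log c^*+h(\rho)+\sum_j p_j\log\frac{p_j}{c^*q_j+p_j},
\end{align*}
where the $\rho\log c^*$ term uses~\eqref{E: const c} and the last sum collapses because $\frac{c^*p_jq_j+p_j^2}{c^*q_j+p_j}=p_j$. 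Since $\sum_j p_j\log\frac{p_j}{c^*q_j+p_j}=D(P\|Q)-\sum_j p_j\log\bigl(c^*+\frac{p_j}{q_j}\bigr)$, this is exactly~\eqref{E: explicit d}. So the gap in your argument is a concrete algebraic one (the forms of $P_1^*,P_2^*$ and the derivation of~\eqref{E: const c} from normalization), not a conceptual one; once fixed, the proof goes through by the plan you describe.
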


\begin{proof}
See Appendix~\ref{app-proof-prop1}.
\end{proof}

The next proposition states some of the properties of the partial divergence, which will be applied in the sequel.

\begin{proposition} \label{prop2}
The partial divergence $d_\rho(P\|Q), 0 \le \rho \le 1$,  where all of the elements of the PMF $Q$ are nonzero, has the following properties:
\begin{enumerate}[(a)]
\item $d_0(P\|Q)=0$. \label{a}
\item $d_1(P\|Q)=D(P\|Q)$. \label{b}
\item Partial divergence is zero if $P=Q$, i.e., $d_\rho(P\|P)=0$.
\item Let $d_\rho^\prime(P\|Q):=\frac{\partial d_\rho(P\|Q)}{\partial \rho}$ denote the derivative of the partial divergence with respect to $\rho$, then $d_0^\prime(P\|Q)=0$. \label{c}
\item If $P \ne Q$, then $d_\rho^\prime(P\|Q) > 0$, for all $0 < \rho \le 1$, i.e., partial divergence is increasing in $\rho$. \label{d}
\item If $P \ne Q$, then $d_\rho^{\prime\prime}(P\|Q) > 0$, for all $0 \le \rho \le 1$, i.e., partial divergence is convex in $\rho$. \label{e}
\item $0 \le d_\rho(P\|Q) \le \rho D(P\|Q)$. \label{f}
\end{enumerate}
\end{proposition}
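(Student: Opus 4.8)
The plan is to prove the seven properties in an order that exploits the explicit formulas in Proposition~\ref{prop1}, falling back on the variational characterization in Lemma~\ref{lemma1} (equation~\eqref{eq-partial-lemma}) whenever that is cleaner. Properties (a) and (b) I would dispatch first and directly from \eqref{eq-partial-lemma}: at $\rho=0$ the constraint $\rho P_1+\bar\rho P_2=P$ forces $P_2=P$, so the objective reduces to $D(P\|P)=0$ (the $\rho D(P_1\|Q)$ term carries a zero coefficient and $P_1$ is free), giving $d_0(P\|Q)=0$; at $\rho=1$ the constraint forces $P_1=P$ and the objective is $D(P\|Q)$, giving (b). Property (c), $d_\rho(P\|P)=0$, is immediate from \eqref{eq-partial-lemma} as well, since then every term $\rho D(P_1\|P)+\bar\rho D(P_2\|P)\ge 0$ with equality at $P_1=P_2=P$, which is feasible. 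I would present (a), (b), (c) together as a short paragraph.

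Next I would handle the monotonicity/convexity facts (d)(e)(f)(g), which require the derivative $d_\rho'(P\|Q)$. Here I would either differentiate the explicit expression \eqref{E: explicit d} — using \eqref{E: const c} to kill the terms involving $\partial c^*/\partial\rho$, since the optimization is over $P_1,P_2$ and by the envelope theorem the derivative in $\rho$ only picks up the explicit $\rho$-dependence — or, more transparently, use the envelope theorem directly on \eqref{eq-partial-lemma}: writing $P_1^*,P_2^*$ for the optimizers, one gets $d_\rho'(P\|Q)=D(P_1^*\|Q)-D(P_2^*\|P) + \rho\frac{d}{d\rho}D(P_1^*\|Q)+\bar\rho\frac{d}{d\rho}D(P_2^*\|P)$, and the last two terms cancel against the contribution of the moving constraint (Lagrange multiplier). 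The cleanest route is probably to note from the KKT analysis behind Proposition~\ref{prop1} that $D(P_1^*\|Q)-D(P_2^*\|P)=\log c^* + (\text{something that vanishes})$, so $d_\rho'(P\|Q)$ has a closed form in terms of $c^*$; then (f) for the derivative at $0$ follows from $c^*\to 1$ as $\rho\to 0$ (check via \eqref{E: const c}: as $\rho\to0$, $c^*\sum p_j q_j/(c^*q_j+p_j)\to 0$ forces $c^*\to 0$ actually — I need to recheck the boundary behavior carefully, this is one of the delicate points), and (d)/(e) follow by showing $d_\rho'>0$ and $d_\rho''>0$ for $P\ne Q$. For (e) I would differentiate \eqref{E: const c} implicitly to get $dc^*/d\rho$ and show the second derivative is a sum of manifestly positive terms; alternatively convexity in $\rho$ follows abstractly because $d_\rho$ is a partial minimization of the jointly convex function $(\rho,P_1,P_2)\mapsto \rho D(P_1\|Q)+\bar\rho D(P_2\|P)$ over a set whose constraint is affine in $(\rho,P_1,P_2)$ — wait, $\rho P_1$ is bilinear, not affine, so that abstract argument needs the substitution $\tilde P_1=\rho P_1$, after which the constraint $\tilde P_1 + P_2 - \rho P_2 = P$ is affine and $\rho D(\tilde P_1/\rho\,\|Q)$ is the perspective of a convex function hence jointly convex; partial minimization then preserves convexity, giving (e) cleanly and (d) as a consequence of (e) together with (c) of Proposition, i.e., $d_\rho'(P\|Q)\ge d_0'(P\|Q)=0$ with strictness from $P\ne Q$.

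Finally (g), $0\le d_\rho(P\|Q)\le \rho D(P\|Q)$: the lower bound is clear since the objective in \eqref{eq-partial-lemma} is a nonnegative combination of divergences; for the upper bound I would simply plug the feasible point $P_1=P_2=P$... no, that is not feasible unless $\rho P+\bar\rho P=P$ which always holds, giving value $\rho D(P\|Q)+\bar\rho D(P\|P)=\rho D(P\|Q)$, so the minimum is at most that. That one line suffices. The main obstacle I anticipate is the careful boundary analysis at $\rho=0$ needed for property (f) — pinning down $\lim_{\rho\to0}c^*$ and showing the derivative expression genuinely tends to $0$ rather than to some nonzero constant — and making sure the envelope-theorem interchange of differentiation and minimization is justified (uniqueness of the optimizer from strict convexity after the perspective substitution takes care of this, so I would state that explicitly). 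The convexity argument via the perspective transformation is the conceptual heart and I would make sure to spell out the change of variables $\tilde P_1 = \rho P_1$ so that the feasible region becomes genuinely affine.
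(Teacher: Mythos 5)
Your route is genuinely different from the paper's. The paper's Appendix~\ref{app-proof-prop2} works almost entirely from the explicit formula~\eqref{E: explicit d} together with the defining relation~\eqref{E: const c} for $c^*$: parts (a), (b) come from the boundary limits $c^*\to 0$ and $c^*\to\infty$; part (c) from solving $c^*=\rho/\bar\rho$ when $P=Q$; the derivative is computed in closed form as $d_\rho'(P\|Q)=\log\bigl(c^*\,\bar\rho/\rho\bigr)$; parts (e) and (f) then reduce to the inequalities $c^*>\rho/\bar\rho$ and $\rho^2<\rho-c^*/(\partial c^*/\partial\rho)$, both established by Cauchy--Schwarz; and part (g) follows from convexity plus the two boundary slopes. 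Your plan instead argues (a), (b), (c), (g) directly from the variational form~\eqref{eq-partial-lemma} --- the one-line feasibility argument $P_1=P_2=P$ for the upper bound in (g) is particularly clean and replaces the paper's ``convexity plus limits'' argument --- and uses the envelope theorem and a perspective change of variables $\tilde P_1=\rho P_1$, $\tilde P_2=\bar\rho P_2$ to get convexity of $d_\rho$ in $\rho$.

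There are, however, two genuine gaps. First, the perspective argument gives only \emph{weak} convexity: the perspective $t\,f(x/t)$ is positively homogeneous of degree one, so the joint function is never strictly convex, and you cannot conclude $d_\rho''(P\|Q)>0$ from it without additional work. The proposition asks for strict inequality in both (e) and (f), and your derivation of (e) from (f) via $d_\rho'\ge d_0'=0$ also depends on strictness of (f). The paper gets strictness precisely via the two Cauchy--Schwarz inequalities~\eqref{E: cauchy-1} and~\eqref{E: cauchy-2}, which fail to be equalities exactly when $P\ne Q$; your abstract route needs a comparable ingredient (e.g.\ differentiating~\eqref{E: const c} implicitly and exhibiting $d_\rho''$ as a variance-type quantity), and you only gesture at this.

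Second, the boundary calculation you flag as delicate is indeed where the work is, and you leave it unresolved. The correct resolution (which the paper carries out) is: from~\eqref{E: const c}, $c^*\to 0$ as $\rho\to 0$, and hence
\[
\lim_{\rho\to 0}\frac{\rho}{c^*}=\lim_{c^*\to 0}\sum_{j}\frac{p_jq_j}{c^*q_j+p_j}=\sum_j q_j=1,
\]
so $d_0'(P\|Q)=\lim_{\rho\to 0}\log\!\bigl(c^*\bar\rho/\rho\bigr)=\log 1=0$. Relatedly, your envelope-theorem expression for $d_\rho'$ is missing the term $\lambda^{\mathsf T}(P_1^*-P_2^*)$ coming from the explicit $\rho$-dependence of the constraint $\rho P_1+\bar\rho P_2=P$; the cross terms you describe as ``cancelling'' are the indirect ones, which vanish by first-order optimality, but the constraint contribution remains and is what produces the extra $\log(\bar\rho/\rho)$ relative to the naive guess $\log c^*$. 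Fixing these two points would make the proposal complete and a nice alternative to the paper's computation.
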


\begin{proof}
See Appendix~\ref{app-proof-prop2}.
\end{proof}

Figure~\ref{F: partial diver} shows two examples of the partial divergence for PMF's with alphabets of size 4. Specifically, $d_\rho(P\|Q)$ versus $\rho$ is plotted for $P=(0.25,0.25,0.25,0.25)$, and two different $Q$'s, $Q_1=(0.1,0.1,0.1,0.7)$ and $Q_2=(0.1,0.4,0.1,0.4)$. The properties in Proposition~\ref{prop2} are apparent in the figure for these examples.

\begin{figure}[t]
\centerline{\scalebox{.33}{\includegraphics{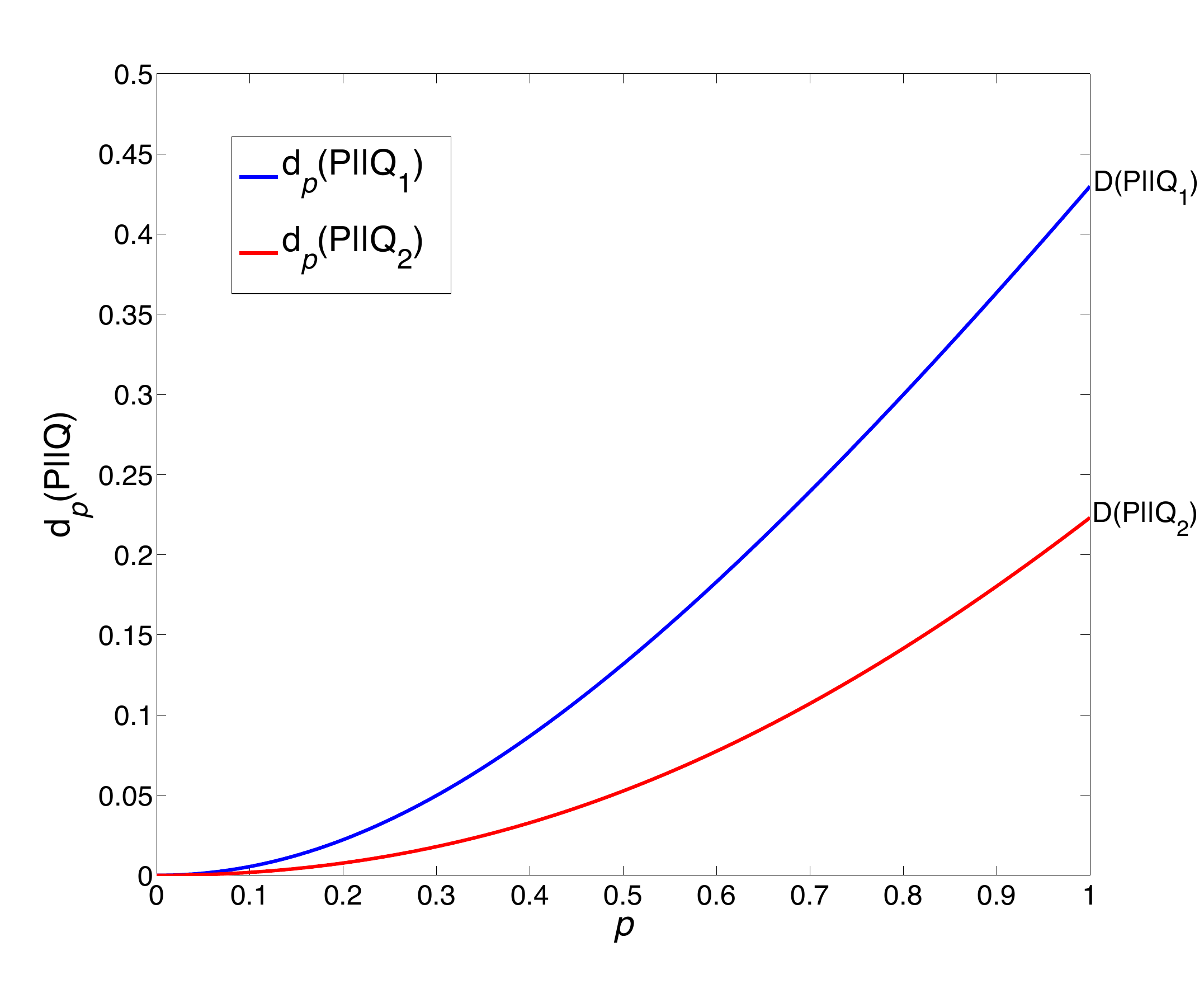}}}
\caption{Partial divergence $d_\rho(P\|Q)$ versus $\rho$ for $P=(0.25,0.25,0.25,0.25)$, $Q_1=(0.1,0.1,0.1,0.7)$, and $Q_2=(0.1,0.4,0.1,0.4)$.}
\label{F: partial diver}
\end{figure}

\begin{proposition} \label{prop2-2}
The partial divergence $d_\rho(P\|Q), 0 \le \rho \le 1$,  satisfies
$$d_\rho(P\|Q) \ge D(P\| \rho Q+\bar{\rho} P).$$
\end{proposition}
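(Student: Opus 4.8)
The plan is to read the inequality off directly from the variational formula for partial divergence in Lemma~\ref{lemma1}. Setting $Q^\prime = P$ there gives
\begin{equation} \notag
d_\rho(P\|Q) = \min_{\substack{P_1,P_2 \in \mathcal{P}^\mathcal{X}: \\ \rho P_1 + \bar{\rho} P_2 = P}} \rho D(P_1\|Q) + \bar{\rho} D(P_2\|P),
\end{equation}
so it suffices to show that $\rho D(P_1\|Q) + \bar{\rho} D(P_2\|P) \ge D(P\|\rho Q + \bar{\rho} P)$ for every feasible pair $(P_1,P_2)$, i.e., every pair of PMFs with $\rho P_1 + \bar{\rho} P_2 = P$. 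The feasible set is nonempty (take $P_1 = P_2 = P$), so the minimum is attained and the reduction is legitimate.

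The key tool is the joint convexity of the Kullback--Leibler divergence in its two arguments: for weights $\rho, \bar{\rho}$,
\begin{equation} \notag
D(\rho P_1 + \bar{\rho} P_2 \,\|\, \rho Q + \bar{\rho} P) \le \rho D(P_1\|Q) + \bar{\rho} D(P_2\|P).
\end{equation}
This is just a coordinatewise application of the log-sum inequality: for each $x \in \mathcal{X}$ apply $\sum_i a_i \log(a_i/b_i) \ge (\sum_i a_i)\log(\sum_i a_i / \sum_i b_i)$ with $(a_1,a_2) = (\rho P_1(x), \bar{\rho} P_2(x))$ and $(b_1,b_2) = (\rho Q(x), \bar{\rho} P(x))$, then sum over $x$. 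Since $\rho P_1 + \bar{\rho} P_2 = P$ by feasibility, the left-hand side is precisely $D(P\|\rho Q + \bar{\rho} P)$, and minimizing the right-hand side over all feasible $(P_1,P_2)$ yields $d_\rho(P\|Q) \ge D(P\|\rho Q + \bar{\rho} P)$.

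For completeness I would remark that $\rho Q + \bar{\rho} P$ inherits strictly positive entries from $Q$, so the right-hand divergence is finite and well defined, and that the boundary cases $\rho \in \{0,1\}$ collapse to the identities $d_0(P\|Q) = 0 = D(P\|P)$ and $d_1(P\|Q) = D(P\|Q)$ of Proposition~\ref{prop2}, where the claimed inequality holds with equality. I do not expect a genuine obstacle here; the only point needing care is invoking convexity of the divergence in the \emph{joint} argument (not merely in each argument separately), which is exactly what the log-sum step supplies.
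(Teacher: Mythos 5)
Your proof is correct and follows essentially the same route as the paper: start from the variational characterization of $d_\rho(P\|Q)$ in Lemma~\ref{lemma1} with $Q'=P$, apply joint convexity of the Kullback--Leibler divergence to lower bound each feasible objective by $D(\rho P_1+\bar\rho P_2\|\rho Q+\bar\rho P)$, and then use the constraint $\rho P_1+\bar\rho P_2=P$. The paper states the convexity step without spelling out the log-sum derivation, which you helpfully make explicit, but the argument is the same.
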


\begin{proof}
From the definition of the partial divergence and~\eqref{eq-partial-lemma}, we have
\begin{align}
d_\rho(P\|Q)&=\min_{\substack{P_1,P_2 \in \mathcal{P}^\mathcal{X}: \\ \rho P_1+\bar{\rho} P_2=P}} \rho D(P_1\|Q) +\bar{\rho} D(P_2\|P) \notag \\
& \ge \min_{\substack{P_1,P_2 \in \mathcal{P}^\mathcal{X}: \\ \rho P_1+\bar{\rho} P_2=P}} D(\rho P_1+\bar{\rho} P_2 \| \rho Q+\bar{\rho} P) \label{eq-prop-partial-2-1} \\
& = D(P \| \rho Q+\bar{\rho} P), \label{eq-prop-partial-2-2}
\end{align}
where~\eqref{eq-prop-partial-2-1} follows from the convexity of the Kullback-Leibler divergence~\cite{Cover}; and~\eqref{eq-prop-partial-2-2} follows from the constraint $\rho P_1+\bar{\rho} P_2=P$ in the minimization.
\end{proof}

The interpretation of Proposition~\ref{prop2-2} is that if all the elements of a sequence are generated independently according to a mixture probability $\rho Q+\bar{\rho} P$, then it is more probable that this sequence has type $P$ than in the case that a fraction $\rho$ of its elements are generated independently according to $Q$ and the remaining fraction $\bar{\rho}$ are generated independently according to $P$. Since the partial divergence $d_\rho(P\|Q)$ is used to obtain achievability results in Theorem~\ref{th-achv2-basic}, it can be substituted with the less complicated function $D(P \| \rho Q+\bar{\rho} P)$ in~\eqref{E: basic-f} with the expense of loosening the bound according to Proposition~\ref{prop2-2}.

\subsection{Decoding From Pattern Detection}
\label{subsec-decode-algorithm}

In this section, we introduce decoding from pattern detection. The encoding structure is as follows: Given an input distribution $P$, the codebook $C^k(m), m \in \{1,2,..., M\}$ is randomly and independently generated, i.e., all $C_i(m), i \in \{1,2,...,k\}, m \in \{1,2,..., M\}$ are iid according to $P$. Note that the number of received symbols at the decoder, $N$, is a random variable. However, without loss of generality, we can focus on the case that $|N/k-\alpha|<\epsilon$ for arbitrarily small $\epsilon$, and essentially assume that the receive window is of length $n=\alpha k$, which makes the analysis of the probability of error for the decoding algorithm more concise. In order to see this, we assume that for arbitrarily small $\epsilon$, if $|N/k-\alpha| \ge \epsilon$, then the decoder declares an error. The probability of this error, $\mathbb{P}(|N/k-\alpha| \ge \epsilon)$, approaches zero as $k \to \infty$.

The decoder chooses $k$ of the $n$ symbols from the output sequence $y^n$. Let $\tilde{y}^k$ denote the subsequence of the chosen symbols, and $\hat{y}^{n-k}$ denote the subsequence of the other symbols. Decoding from pattern detection involves two stages for each choice of the output symbols. For each choice, the first stage checks if this choice of the output symbols is a good one, which consists of checking if $\tilde{y}^k$ is induced by a codeword, i.e., if $\tilde{y}^k \in T_{PW}$, and if $\hat{y}^{n-k}$ is induced by noise, i.e., if $\hat{y}^{n-k} \in T_{W_\star}$. If both of these conditions are satisfied, then we proceed to the second stage, which is performing typicality decoding with $\tilde{y}^k$ over the codebook, i.e., checks if $\tilde{y}^k \in T_{[W]_\mu}(c^k(m))$ for a unique index $m$ and a fixed typicality parameter $\mu > 0$. If this condition is not satisfied, then we make another choice for the $k$ symbols and repeat the two-stage decoding procedure. At any step that we run the second stage, if the typicality decoding declares a message as being sent, then decoding ends. If the decoder does not declare any message as being sent by the end of all $\binom{n}{k}$ choices, then the decoder declares an error. In this structure, we constrain the search domain for the typicality decoding (the second stage) only to typical patterns by checking that our choice of codeword symbols satisfies the conditions in the first stage.

In decoding from pattern detection, the first stage essentially distinguishes a sequence obtained partially from the codewords and partially from the noise from a codeword sequence or a noise sequence. As a result, in the analysis of the probability of error, partial divergence and its properties described in Section~\ref{subsec-partial-divergence} play an important role. Using decoding from pattern detection, we have the following theorem.

\begin{theorem}
\label{th-achv2-basic}
Using decoding from pattern detection for intermittent communication $(\mathcal{X},\mathcal{Y},W, \star,\alpha)$, rates less than  $\max_{P}\{\left(\mathbb{I}(P,W)-f(P,W,\alpha)\right)^+\}$ are achievable, where
\begin{equation} \label{E: basic-f}
f(P,W,\alpha):=\max_{0 \le \beta \le \max\{1,\frac{1}{\alpha-1}\}} \{(\alpha-1)h(\beta)+ h((\alpha-1)\beta)-d_{(\alpha-1) \beta}(PW \| W_\star) -(\alpha-1)d_\beta(W_\star\|PW) \}.
\end{equation}
\end{theorem}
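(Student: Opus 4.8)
The plan is to compute the average probability of error of decoding from pattern detection with a random codebook in which all $C_i(m)$ are iid $\sim P$, show that it vanishes whenever $R<\mathbb{I}(P,W)-f(P,W,\alpha)$, and then optimize over $P$ (the $(\cdot)^+$ is trivial, since rate $0$ is always achievable). As in the proof of Theorem~\ref{theorem-rate-zero} I would first discard the event $\{|N/k-\alpha|>\epsilon\}$, whose probability is $o(1)$, and work with a deterministic receive window of length $n=\alpha k$; then condition on $m=1$ being sent and write $S^\ast$ for the \emph{true} selection, i.e.\ the $k$ output positions actually occupied by codeword symbols. Tracking the decoder's iteration over selections, an error forces either (ii) the true selection $S^\ast$ to fail stage~1 ($\tilde y^k_{S^\ast}\notin T_{[PW]_\mu}$ or $\hat y^{n-k}_{S^\ast}\notin T_{[W_\star]_\mu}$) or stage~2 ($\tilde y^k_{S^\ast}\notin T_{[W]_\mu}(C^k(1))$); or (iii) some selection $S$ to pass stage~1 while $\tilde y^k_S\in T_{[W]_\mu}(C^k(m'))$ for some $m'\neq1$. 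Term (ii) is $o(1)$ by Lemma~\ref{fact-typ2} (also using that $C^k(1)$ is $P$-typical w.h.p., which forces $\tilde y^k_{S^\ast}$ to be $PW$-typical with a slightly larger constant, so the stage-1 and stage-2 typicality constants must be chosen consistently), and the overall bound reads
\begin{equation}\label{E: union1}
p_e \;\le\; \sum_{S}\sum_{m'\neq 1}\mathbb{P}\!\left(S\text{ passes stage }1,\ \tilde y^k_S\in T_{[W]_\mu}(C^k(m'))\right)\;+\;\mathbb{P}(|N/k-\alpha|>\epsilon)\;+\;o(1).
\end{equation}
The work is in the first term.

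To bound it I would classify each selection $S$ by its \emph{profile} $\beta\in[0,1]$: say $S$ has profile $\beta$ if it picks $(\alpha-1)\beta k$ of the $(\alpha-1)k$ true-noise positions, hence $k-(\alpha-1)\beta k$ of the $k$ true-codeword positions. There are $\binom{k}{(\alpha-1)\beta k}\binom{(\alpha-1)k}{(\alpha-1)\beta k}\doteq e^{k[(\alpha-1)h(\beta)+h((\alpha-1)\beta)]}$ selections of profile $\beta$, with the convention $h(\cdot)=-\infty$ outside $[0,1]$ automatically killing the infeasible profiles where $(\alpha-1)\beta>1$. For a profile-$\beta$ selection, after averaging over $C^k(1)$ the selected block $\tilde y^k_S$ is an \emph{independent} mixture of $k-(\alpha-1)\beta k$ symbols $\sim PW$ and $(\alpha-1)\beta k$ symbols $\sim W_\star$, while the discarded block $\hat y^{n-k}_S$ is an independent mixture of $(\alpha-1)\beta k$ symbols $\sim PW$ and $(\alpha-1)(1-\beta)k$ symbols $\sim W_\star$. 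Stage~1 requires $\tilde y^k_S$ to have type near $PW$ and $\hat y^{n-k}_S$ to have type near $W_\star$; applying Lemma~\ref{lemma1} to the two independent blocks separately (and letting $\mu\to0$ afterwards, using continuity of the optimal value in~\eqref{eq-partial-lemma}, so that "near'' becomes exact) gives $\mathbb{P}(S\text{ passes stage }1)\doteq e^{-k\,d_{(\alpha-1)\beta}(PW\|W_\star)}\,e^{-(\alpha-1)k\,d_\beta(W_\star\|PW)}$. Conditioned on stage~1 passing, $\tilde y^k_S$ has $Y$-type $\approx PW$, and since $C^k(m')$ is independent of $y^n$ for $m'\neq1$, Lemma~\ref{fact1} yields $\mathbb{P}(\tilde y^k_S\in T_{[W]_\mu}(C^k(m')))\le\text{poly}(k)\,e^{-k(\mathbb{I}(P,W)-\epsilon)}$. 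Multiplying and summing over the $M-1$ wrong messages and the $O(k)$ profiles,
\begin{equation}\label{E: eq1}
\sum_{S}\sum_{m'\neq1}\mathbb{P}\!\left(S\text{ passes stage }1,\ \tilde y^k_S\in T_{[W]_\mu}(C^k(m'))\right)\;\doteq\; M\!\!\sum_{0\le\beta\le1}\!\! e^{k\left[(\alpha-1)h(\beta)+h((\alpha-1)\beta)-d_{(\alpha-1)\beta}(PW\|W_\star)-(\alpha-1)d_\beta(W_\star\|PW)-\mathbb{I}(P,W)+\epsilon\right]},
\end{equation}
whose exponent rate is $R-\mathbb{I}(P,W)+f(P,W,\alpha)+\epsilon$ with $f$ exactly the expression in~\eqref{E: basic-f}. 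Hence the first term of~\eqref{E: union1} vanishes whenever $R<\mathbb{I}(P,W)-f(P,W,\alpha)-\epsilon$; letting $\epsilon\to0$, maximizing over $P$, and extracting a good code from the random ensemble gives the theorem.

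I expect the main obstacle to be the bookkeeping concentrated in~\eqref{E: eq1}: counting selections by profile, verifying that $\tilde y^k_S$ and $\hat y^{n-k}_S$ really are \emph{independent} mixtures with exactly the stated proportions (so that Lemma~\ref{lemma1} can be applied blockwise and the two exponents simply add), and carefully sending the typicality parameter $\mu\to0$ only at the end — for it is only in that limit that "type near $PW$/$W_\star$'' delivers the partial-divergence exponents $d_{(\alpha-1)\beta}(PW\|W_\star)$ and $d_\beta(W_\star\|PW)$ and that Lemma~\ref{fact1} applies cleanly to the conditioned block. A secondary nuisance is choosing the several typicality constants consistently across stage~1 and stage~2 so that term (ii) — the true selection succeeding — is not inadvertently spoiled.
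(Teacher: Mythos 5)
Your proposal is correct and follows essentially the same route as the paper's proof: split the error into ``true selection fails'' plus ``some selection passes stage~1 and is jointly typical with a wrong codeword,'' classify selections by the number $k_1=(\alpha-1)\beta k$ of noise positions chosen, count them via Stirling to get the entropy terms, apply Lemma~\ref{lemma1} to the two independent blocks to get the two partial-divergence exponents, apply Lemma~\ref{fact1} for the wrong-codeword typicality, and take the dominant exponent over the polynomially many profiles. The only cosmetic differences are that you parametrize directly by $\beta$ rather than by $k_1$ (the paper substitutes $\beta=k_1/(n-k)$ only at the end) and that you spell out the $\mu\to 0$ continuity step that the paper absorbs into $e^{o(k)}$.
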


\begin{proof}
See Appendix~\ref{app-proof-th-achv2-basic}.
\end{proof}

\begin{remark}
\label{remark-achv-general}
The result of Theorem~\ref{th-achv2-basic} is valid for an arbitrary intermittent process in Figure~\ref{fig-system-model} provided $N/k \xrightarrow{\text{ } p \text{ }} \alpha$ as $k \to \infty$.
\end{remark}

\begin{remark}
\label{remark-interpretation}
The achievable rate in Theorem~\ref{th-achv2-basic} can be expressed as follows: Rate $R$ is achievable if for all mismatch $0 \le \beta \le \max\{1,\frac{1}{\alpha-1}\}$, we have
$$ R+(\alpha-1)h(\beta)+ h((\alpha-1)\beta) <  \mathbb{I}(P,W) + d_{(\alpha-1) \beta}(PW \| W_\star) +(\alpha-1)d_\beta(W_\star\|PW).$$
The interpretation is that the total amount of uncertainty should be smaller than the total amount of information. Specifically, $R$ and $(\alpha-1)h(\beta)+ h((\alpha-1)\beta)$ are the amount of uncertainty in codewords and patterns, respectively, and $\mathbb{I}(P,W)$ and $d_{(\alpha-1) \beta}(PW \| W_\star) +(\alpha-1)d_\beta(W_\star\|PW)$ are the amount of information about the codewords and patterns, respectively. The partial divergence terms are the total exponent in the probability of confusing the codeword / noise symbols under mismatch $\beta$, which is defined as the number of incorrectly chosen noise symbols divided by total number of insertions. For more details, please see Appendix~\ref{app-proof-th-achv2-basic}.
\end{remark}

The achievable rate in Theorem~\ref{th-achv2-basic} is always larger than the one in Theorem~\ref{theorem-simple-lower}, because decoding from pattern detection utilizes the fact that the choice of the codeword symbols at the receiver might not be a good one, and therefore, restricts the typicality decoding only to the typical patterns and decreases the search domain. In fact, the difference between the two achievable rates corresponds to the two partial divergence terms in~\eqref{E: basic-f}, which are the total exponent in the probability of confusing the codeword / noise symbols under mismatch $\beta$. The form of the achievable rate is reminiscent of communications overhead as the cost of constraints~\cite{overhead}, where the constraint is the system's burstiness or intermittency, and the overhead cost is $f(P,W,\alpha)$. Using the properties of partial divergence, we state some properties of this overhead cost in the next proposition.

\begin{proposition} \label{prop: overheadf-basic}
The overhead cost $f(P,W,\alpha)$ in~\eqref{E: basic-f} has the following properties:
\begin{enumerate}[(a)]
\item The maximum of the term in~\eqref{E: basic-f} occurs in the interval $[0,1/\alpha]$, i.e., instead of the maximization over $0 \le \beta \le \max\{1,\frac{1}{\alpha-1}\}$, $f(P,W,\alpha)$ can be found by the same maximization problem over $0 \le \beta \le 1/\alpha$. \label{prop3-a-basic}
\item $f(P,W,\alpha)$ is increasing in $\alpha$. \label{prop3-b-basic}
\item $f(P,W,1)=0$. \label{prop3-bb-basic}
\item If $D(PW\|W_\star)$ is finite, then $f(P,W,\alpha) \to \infty$ as $\alpha \to \infty$. \label{prop3-bbb-basic}
\end{enumerate}
\end{proposition}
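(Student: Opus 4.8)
The plan is to derive all four parts from one reparametrization of \eqref{E: basic-f}. Write $g(\beta)$ for the bracketed quantity being maximized and split it as $g=E-D$ with $E(\beta):=(\alpha-1)h(\beta)+h((\alpha-1)\beta)$ and $D(\beta):=d_{(\alpha-1)\beta}(PW\|W_\star)+(\alpha-1)d_\beta(W_\star\|PW)$; since $h(\cdot)=-\infty$ off $[0,1]$ and $d_\rho$ is defined only for $\rho\in[0,1]$, the maximization is effectively over $\beta\in[0,\min\{1,1/(\alpha-1)\}]$. I would also record the substitution $\nu:=(\alpha-1)\beta$, $\gamma:=\alpha-1$, under which
\[
g=\bigl[h(\nu)-d_\nu(PW\|W_\star)\bigr]+\gamma\bigl[h(\nu/\gamma)-d_{\nu/\gamma}(W_\star\|PW)\bigr],\qquad 0\le\nu\le\min\{1,\gamma\},
\]
so the first bracket is entirely $\alpha$-free. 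The properties of partial divergence in Proposition~\ref{prop2} do almost all of the analytic work.

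For part (\ref{prop3-a-basic}) I would differentiate $E$ directly: $E'(\beta)=(\alpha-1)\log\frac{(1-\beta)(1-(\alpha-1)\beta)}{(\alpha-1)\beta^2}$, and the argument of the logarithm equals $1$ exactly when $1-\alpha\beta=0$; since $E$ is a sum of concave functions it therefore increases on $(0,1/\alpha)$ and decreases on $(1/\alpha,\min\{1,1/(\alpha-1)\})$. Meanwhile $D$ is nondecreasing in $\beta$, since each summand is a partial divergence whose mismatch ratio is a linear increasing function of $\beta$ and $d_\rho$ is nondecreasing in $\rho$ (Proposition~\ref{prop2}(\ref{d}), monotonicity). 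Hence $g'=E'-D'\le 0$ on $[1/\alpha,\min\{1,1/(\alpha-1)\}]$, so $\max_\beta g$ is attained in $[0,1/\alpha]$.

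For part (\ref{prop3-b-basic}), by the reparametrization it is enough to show that for each fixed $\nu$ the map $\gamma\mapsto\gamma\bigl[h(\nu/\gamma)-d_{\nu/\gamma}(W_\star\|PW)\bigr]$ is nondecreasing, because the other bracket is $\gamma$-free and the feasible interval $[0,\min\{1,\gamma\}]$ only enlarges with $\gamma$; then $f(P,W,\alpha)=\max_\nu g$ is nondecreasing in $\gamma$, i.e.\ in $\alpha$. For a generic $\psi$ one has $\frac{d}{d\gamma}\bigl[\gamma\psi(\nu/\gamma)\bigr]=\psi(p)-p\psi'(p)$ at $p=\nu/\gamma$; a short computation gives $h(p)-ph'(p)=-\log(1-p)\ge 0$, while for $\psi(p)=d_p(W_\star\|PW)$ the convexity of $d_\rho$ in $\rho$ together with $d_0=0$ (Proposition~\ref{prop2}(\ref{a}),(\ref{e})) gives $\psi(p)-p\psi'(p)=-p^2\frac{d}{dp}\!\bigl(\psi(p)/p\bigr)\le 0$, using that a convex function vanishing at the origin has nondecreasing secant slope. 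Subtracting proves the map nondecreasing. Part (\ref{prop3-bb-basic}) is then immediate: at $\alpha=1$ every $(\alpha-1)$ prefactor in \eqref{E: basic-f} vanishes and $h(0)=d_0(PW\|W_\star)=0$, so $g\equiv 0$ and $f(P,W,1)=0$.

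For part (\ref{prop3-bbb-basic}) I would use the lower bound obtained by evaluating \eqref{E: basic-f} at the feasible point $\beta=1/\alpha$, namely $f(P,W,\alpha)\ge\alpha h(1/\alpha)-d_{(\alpha-1)/\alpha}(PW\|W_\star)-(\alpha-1)d_{1/\alpha}(W_\star\|PW)$. Here $\alpha h(1/\alpha)=\log\alpha+(\alpha-1)\log\frac{\alpha}{\alpha-1}\to\infty$; the second term is at most $\frac{\alpha-1}{\alpha}D(PW\|W_\star)\le D(PW\|W_\star)<\infty$ by Proposition~\ref{prop2}(\ref{f}) and the hypothesis; and the third tends to $0$, since $d_0'(W_\star\|PW)=0$ (Proposition~\ref{prop2}(\ref{c})) makes $d_\rho(W_\star\|PW)=o(\rho)$, so $(\alpha-1)d_{1/\alpha}(W_\star\|PW)\to 0$. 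Thus the right-hand side diverges. The step I expect to be the main obstacle is the sign analysis behind parts (\ref{prop3-a-basic}) and (\ref{prop3-b-basic}) — locating the maximizer of $E$ at exactly $\beta=1/\alpha$ and controlling $\psi(p)-p\psi'(p)$ for the partial-divergence term, where the convexity-through-the-origin property of $d_\rho$ is the key input. A secondary technicality is that the partial divergences above are to be read under the full-support hypothesis of Proposition~\ref{prop2} on the second argument; when $PW$ or $W_\star$ fails full support, parts (\ref{prop3-a-basic}) and (\ref{prop3-bbb-basic}) still go through by direct estimates from the defining minimization \eqref{eq-partial-lemma} (e.g.\ taking $P_1=PW$ there forces $D(P_1\|PW)=0$ and bounds $(\alpha-1)d_{1/\alpha}(W_\star\|PW)$ by a quantity of order $1/\alpha$).
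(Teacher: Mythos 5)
Your proof is correct; parts (a) and (c) essentially match the paper's, while (b) and (d) take genuinely different (and arguably cleaner) routes. For (b), the paper differentiates $f$ with respect to $\alpha$ along the optimizer $\beta^*(\alpha)$, uses the first-order condition~\eqref{E: maxbf} to annihilate the $\partial\beta^*/\partial\alpha$ term (an envelope-theorem step carried out through the constants $c_1,c_2$ from Proposition~\ref{prop1}), and reduces the remainder to $-\log(1-\beta^*)$ plus a term it ascribes to the convexity of $d_\rho$. Your substitution $\nu=(\alpha-1)\beta$, $\gamma=\alpha-1$ instead isolates all $\alpha$-dependence in the single summand $\gamma\bigl[h(\nu/\gamma)-d_{\nu/\gamma}(W_\star\|PW)\bigr]$, whose monotonicity in $\gamma$ for each fixed $\nu$ reduces to $\bigl[h(p)-ph'(p)\bigr]-\bigl[d_p-pd_p'\bigr]\ge 0$, with $h(p)-ph'(p)=-\log(1-p)\ge 0$ and $d_p-pd_p'\le 0$ via the nondecreasing secant slope of the convex $d_\rho$ through the origin. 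This avoids the envelope bookkeeping and makes explicit that $d_0=0$, and not convexity alone, is the property doing the work --- a point the paper's corresponding step leaves implicit. For (d), the paper tracks the asymptotics of the optimizer ($\beta^*\to 0$, $\alpha\beta^*\to 1$) and substitutes $\alpha=1/\beta^*$; you instead evaluate the objective at the always-feasible point $\beta=1/\alpha$ to obtain a diverging lower bound, bounding $d_{(\alpha-1)/\alpha}(PW\|W_\star)$ via Proposition~\ref{prop2}(\ref{f}) and killing $(\alpha-1)d_{1/\alpha}(W_\star\|PW)$ via $d_\rho=o(\rho)$; this sidesteps the optimizer analysis entirely. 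Both routes reach the same conclusions; yours trades envelope calculations and optimizer tracking for a change of variables and a pointwise lower bound.
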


\begin{proof}
See Appendix~\ref{app-proof-prop: overheadf-basic}.
\end{proof}

Note that part (\ref{prop3-b-basic}) in Proposition~\ref{prop: overheadf-basic} indicates that increasing the intermittency rate or the receive window increases the overhead cost, resulting in a smaller achievable rate. Parts (\ref{prop3-bb-basic}) and (\ref{prop3-bbb-basic}) show that the achievable rate is equal to the capacity of the channel for $\alpha=1$ and approaches zero as $\alpha \to \infty$.

\begin{remark}
The results in this section can be extended to packet-level intermittent communication in which the intermittency is modeled at the packet level, and regimes that lie between the non-contiguous transmission of codeword symbols in intermittent communication, and the contiguous transmission of codeword symbols in asynchronous communication are explored. Specifically, the system model in this paper can be generalized to small packet, medium packet, and large packet intermittent communication. See, for example,~\cite{mostafa-allerton-12}.
\end{remark}

Now consider a binary symmetric channel (BSC) for the DMC in Figure~\ref{fig-system-model} with the crossover probability $ 0 \le p < 0.5$, and the noise symbol $\star=0$. Figure~\ref{F: bsc-2} shows the value of the achievable rates for different $p$, versus $\alpha$. $R_{\text{Insertion}}$ denotes the achievable rate obtained from Theorem~\ref{th-achv2-basic} if the channel is noiseless ($p=0$), and can be proven to be equal to $\max_{0 \le p_0 \le 1} \{2h(p_0)-\max_{0 \le \beta \le 1} \{(\alpha-1) h(\beta)+h((\alpha-1)\beta)+(1-(\alpha-1)\beta)h(\frac{p_0-(\alpha-1)\beta}{1-(\alpha-1)\beta})\}\}$.

\begin{figure}[t]
\centerline{\scalebox{.38}{\includegraphics{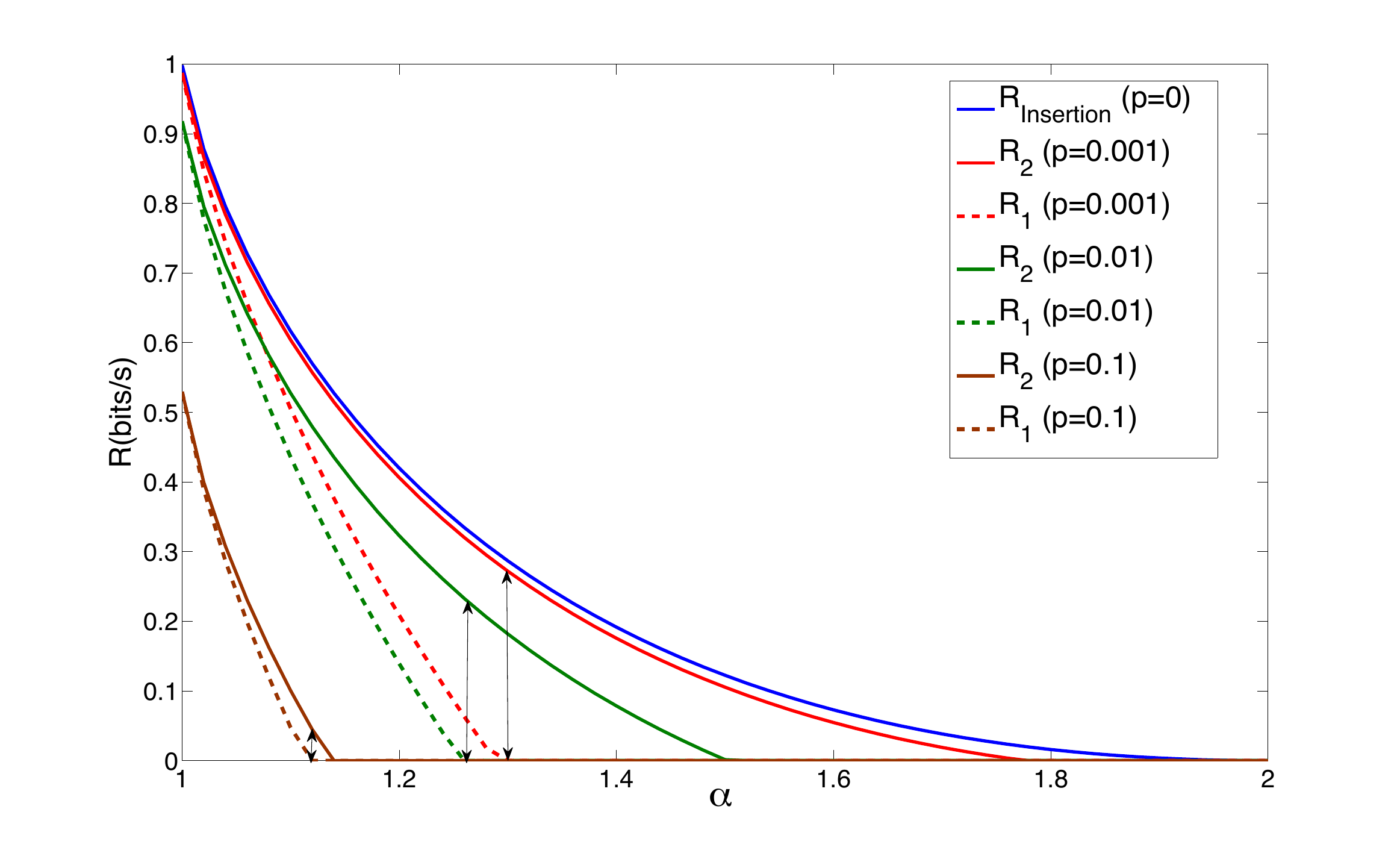}}}
\caption{Achievable rate region $(R,\alpha)$ for the BSC for different cross over probability $p$'s.}
\label{F: bsc-2}
\end{figure}

As we can see from the plot, the achievable rate in Theorem~\ref{th-achv2-basic} (indicated by ``$R_2$") is always larger than the one in Theorem~\ref{theorem-simple-lower} (indicated by ``$R_1$'') since decoding from pattern detection takes advantage of the fact that the choice of the $k$ output symbols might not be a good one. Specifically, the exponent obtained in Lemma~\ref{lemma1} in terms of the partial divergence helps the decoder detect the right symbols, and therefore, achieve a larger rate. The arrows in Figure~\ref{F: bsc-2} show this difference and suggest that the benefit of using decoding from pattern detection is larger for increasing $\alpha$. Note that the larger $\alpha$ is, the smaller the achievable rate would be for a fixed $p$. Not surprisingly, as $\alpha \to 1$, the capacity of the BSC is approached for both of the achievable rates. In this example, we cannot achieve a positive rate if $\alpha \ge 2$, even for the case of a noiseless channel ($p=0$). However, this is not true in general, because even the first achievable rate ($R_1$) can be positive for a large $\alpha$, if the capacity of the channel $C_W$ is sufficiently large. The results suggest that, as communication becomes more intermittent and $\alpha$ becomes larger, the achievable rate is decreased due to the additional uncertainty about the positions of the codeword symbols at the decoder.

\section{Upper Bounds}
\label{sec-upper}

In this section, we focus on obtaining upper bounds on the capacity of a special case of intermittent communication in which the DMC in Figure~\ref{fig-system-model} is binary-input binary-output noiseless with the noise symbol $\star=0$. The achievable rate for this case is denoted by $R_{\text{Insertion}}$ in Section~\ref{subsec-decode-algorithm}, and is shown by the blue curve in Figure~\ref{F: bsc-2}. Similar to~\cite{duman-outer-deletion1}, upper bounds are obtained by providing the encoder and the decoder with various amounts of side-information, and calculating or upper bounding the capacity of this genie-aided system. After introducing a useful function $g(a,b)$ in Section~\ref{subsec-auxiliary}, we obtain upper bounds in Section~\ref{subsec-genie}. The techniques of in this section can in principle be applied to non-binary and noisy channels as well; however, the computational complexity for numerical evaluation of the genie-aided system grows very rapidly in the size of the alphabets.

\subsection{Auxiliary Channel: Uniform Insertion}
\label{subsec-auxiliary}

Let $a$ and $b$ be two integer numbers such that $0 \le a \le b$, and consider a discrete memoryless channel for which at each channel use the input consists of a sequence of $a$ bits and the output consists of a sequence of $b$ bits, i.e., the input and output of this channel are $\mathbf{A} \in \{0,1\}^a$ and $\mathbf{B} \in \{0,1\}^b$, respectively. For each channel use, $b-a$ zeroes are inserted randomly and uniformly among the input symbols. The set of positions at which the insertions occur takes on each of the possible $\binom{b}{b-a}=\binom{b}{a}$  realizations with equal probability, and is unknown to the transmitter and the receiver. As an example, the transition probability matrix of this channel for the case of $a=2$ and $b=3$ is reported in Table~\ref{tab-1}.

\begin{table}[t]
\centering
\begin{tabular}{|l|c|c|c|c|c|c|c|c|}
\hline
\diaghead{\theadfont Diag kkkk}%
{$\mathbf{A}$}{$\mathbf{B}$}&
000 & 001 & 010 & 100 & 011 & 101 & 110 & 111\\
\hline
00 & 1 & 0 & 0 & 0 & 0 & 0 & 0 & 0\\
\hline
01 & 0 & 2/3 & 1/3 & 0 & 0 & 0 & 0 & 0\\
\hline
10 & 0 & 0 & 1/3 & 2/3 & 0 & 0 & 0 & 0\\
\hline
11 & 0 & 0 & 0 & 0 & 1/3 & 1/3 & 1/3 & 0\\
\hline
\end{tabular}
\caption{Transition probabilities $P(\mathbf{B}|\mathbf{A})$ for the auxiliary channel with $a=2$ and $b=3$.}
\label{tab-1}
\end{table}

The capacity of the auxiliary channel is defined as
\begin{equation} \label{e-func-g}
g(a,b):=\max_{P(\mathbf{A})} \mathbb{I}(\mathbf{A};\mathbf{B}), \quad 0 \le a \le b,
\end{equation}
where $P(\mathbf{A})$ is the input distribution. The exact value of the function $g(a,b)$ for finite $a$ and $b$ can be numerically computed by evaluating the transition probabilities $P(\mathbf{B}|\mathbf{A})$ and using the Blahut-Arimoto algorithm~\cite{blahut-cap} to maximize the mutual information. The largest value of $b$ for which we are able to evaluate the function $g(a,b)$ for all values of $a \le b$ is $b=17$. As defined in~\eqref{e-func-g}, $g(a,b)$ is the capacity of a DMC with input alphabet size of $2^a$ and output alphabet size of $2^b$ with a given probability transition matrix for the channel. The complexity of computing $g(a,b)$ using Blahut-Arimoto algorithm cannot be explicitly expressed as a function of $a$ and $b$, as it depends on the convergence rate of the algorithm, which in turn depends on the probability transition matrix. However, the complexity of one iteration of the Blahut-Arimoto algorithm is equal to $O(2^{a+b})$.

Although we cannot obtain a closed-form expression for the function $g(a,b)$, we can find a closed-form upper and lower bound by expanding the mutual information in~\eqref{e-func-g} and bounding some of its terms. The upper and lower bound and a numerical comparison between the lower bound, the exact value, and the upper bound on the function $g(a,b)$ can be found in~\cite{mostafa-ita-13}.

The following definition will be useful in expressing the upper bounds in Section~\ref{subsec-genie}.
\begin{equation} \label{e-auxiliary-cap2}
\phi(a,b):=a-g(a,b),
\end{equation}
Note that the function $\phi(a,b)$ quantifies the loss in capacity due to the uncertainty about the positions of the insertions, and cannot be negative. The following proposition characterizes some of the properties of the functions $g(a,b)$ and $\phi(a,b)$, which will be used later.
\begin{proposition} \label{prop-auxil1}
The functions $g(a,b)$ and $\phi(a,b)$ have the following properties:
\begin{enumerate}[(a)]
\item $g(a,b) \le a$, $\phi(a,b) \ge 0$. \label{prop-aux-a}
\item $g(a,a)=a$, $\phi(a,a)=0$. \label{prop-aux-b}
\item $g(1,b)=1$, $\phi(1,b)=0$. \label{prop-aux-c}
\item $g(a,b+1) \le g(a,b)$, $\phi(a,b+1) \ge \phi(a,b)$. \label{prop-aux-d}
\item $g(a+1,b+1) \le 1+g(a,b)$, $\phi(a+1,b+1) \ge \phi(a,b)$. \label{prop-aux-e}
\end{enumerate}
\end{proposition}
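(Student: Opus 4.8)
The plan is to prove each of the five properties in turn, relying on the operational characterization $g(a,b)=\max_{P(\mathbf{A})}\mathbb{I}(\mathbf{A};\mathbf{B})$ for the uniform-insertion channel and on the relation $\phi(a,b)=a-g(a,b)$. For part (\ref{prop-aux-a}), I would observe that $\mathbf{A}\in\{0,1\}^a$, so $\mathbb{I}(\mathbf{A};\mathbf{B})\le H(\mathbf{A})\le a$ for any input distribution, hence $g(a,b)\le a$ and $\phi(a,b)\ge 0$; this is just the data-processing-free entropy bound. For part (\ref{prop-aux-b}), when $b=a$ there are no insertions, the channel is noiseless and invertible, so $\mathbb{I}(\mathbf{A};\mathbf{B})=H(\mathbf{A})$, which is maximized at the value $a$ by the uniform input, giving $g(a,a)=a$ and $\phi(a,a)=0$. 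For part (\ref{prop-aux-c}), with $a=1$ the input is a single bit; inserting $b-1$ zeros into a single-bit sequence leaves the weight (and hence the identity of that bit) recoverable from $\mathbf{B}$, so the channel is again effectively noiseless and $g(1,b)=\max_{P(A_1)}H(A_1)=1$, whence $\phi(1,b)=0$.

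The substantive parts are (\ref{prop-aux-d}) and (\ref{prop-aux-e}), which assert monotonicity of $g$ (equivalently $\phi$) under increasing the number of inserted zeros and under simultaneously increasing input and output length. For (\ref{prop-aux-d}), the key idea is a degradation/stochastic-simulation argument: the uniform-insertion channel with parameters $(a,b+1)$ can be realized by first passing $\mathbf{A}$ through the $(a,b)$ channel to obtain $\mathbf{B}\in\{0,1\}^b$, and then inserting one further zero uniformly at random among the $b+1$ possible positions of $\mathbf{B}$. One must check that this composition indeed produces the correct transition law — namely that inserting $b-a$ zeros uniformly and then one more zero uniformly yields the same distribution as inserting $b+1-a$ zeros uniformly in one shot. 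This is a combinatorial identity on the distribution over insertion-position patterns; once verified, $\mathbf{A}\to\mathbf{B}\to\mathbf{B}'$ forms a Markov chain, and the data processing inequality gives $\mathbb{I}(\mathbf{A};\mathbf{B}')\le\mathbb{I}(\mathbf{A};\mathbf{B})$ for every input distribution, hence $g(a,b+1)\le g(a,b)$ and $\phi(a,b+1)\ge\phi(a,b)$.

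For part (\ref{prop-aux-e}), the natural approach is to relate the $(a+1,b+1)$ channel to the $(a,b)$ channel by conditioning on the first input bit. Write $\mathbf{A}'=(A_1,\mathbf{A})$ with $A_1\in\{0,1\}$ and $\mathbf{A}\in\{0,1\}^a$. Expanding $\mathbb{I}(\mathbf{A}';\mathbf{B}')=H(A_1)+\mathbb{I}(\mathbf{A};\mathbf{B}'\,|\,A_1)\le 1+\mathbb{I}(\mathbf{A};\mathbf{B}'\,|\,A_1)$ (using $H(A_1)\le1$ and $\mathbb{I}(A_1;\mathbf{B}')\le H(A_1)$, so in fact $\mathbb{I}(\mathbf{A}';\mathbf{B}')\le 1 + \mathbb{I}(\mathbf{A};\mathbf{B}'|A_1)$ after dropping the nonneg.\ term appropriately — I will be careful with the chain rule), one wants to argue that given $A_1$, the conditional channel from $\mathbf{A}$ to $\mathbf{B}'$ is (a degraded version of, or no better than) the $(a,b)$ uniform-insertion channel. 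The cleanest route may instead be to exhibit a coding scheme showing $g(a+1,b+1)\ge$ something, but since we want an \emph{upper} bound on $g(a+1,b+1)$, the DPI/conditioning route is the right one; the technical care needed is in showing that the sub-channel obtained after revealing $A_1$ (and the position of $A_1$ among the output coordinates, which a genie could supply, only increasing the mutual information) is exactly an $(a,b)$ uniform-insertion channel acting on the remaining $a$ bits. The $\phi$-statement then follows since $\phi(a+1,b+1)=(a+1)-g(a+1,b+1)\ge(a+1)-(1+g(a,b))=a-g(a,b)=\phi(a,b)$.

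The main obstacle I anticipate is verifying the composition/consistency claim underlying (\ref{prop-aux-d}) — that the uniform distribution over $\binom{b+1}{a}$ insertion patterns factors as uniform-over-$\binom{b}{a}$ followed by uniform-over-$(b+1)$-positions — and the analogous structural decomposition in (\ref{prop-aux-e}); these are elementary but must be stated precisely because the "uniform among input symbols" insertion model is slightly subtle (insertions are uniform over patterns, not over i.i.d.\ per-gap choices). Once that combinatorics is pinned down, every inequality reduces to the data processing inequality plus $H(A_1)\le 1$, and the $\phi$-versions are immediate restatements via $\phi=a-g$. I would present (\ref{prop-aux-a})--(\ref{prop-aux-c}) in one or two lines each and devote the bulk of the argument to the Markov-chain constructions for (\ref{prop-aux-d}) and (\ref{prop-aux-e}).
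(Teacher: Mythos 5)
Your parts (a), (b), and (c) agree with the paper in both approach and content. For part (d) you take a genuinely different route: you realize the $(a,b+1)$ channel as a physical degradation of the $(a,b)$ channel (pass $\mathbf{A}$ through $(a,b)$, then insert one further zero uniformly among the $b+1$ gaps) and invoke the data-processing inequality, whereas the paper phrases it as a genie-aided argument in which the decoder is told the location of one insertion. These are two sides of the same coin, but your version makes the key combinatorial claim explicit and is arguably more rigorous. The consistency check does hold: the number of (old-pattern, gap) pairs mapping to a fixed insertion pattern $S\subseteq\{1,\dots,b+1\}$ with $|S|=b+1-a$ is $|S|=b+1-a$, so the induced probability of $S$ is $\frac{b+1-a}{(b+1)\binom{b}{a}}=\frac{1}{\binom{b+1}{a}}$, confirming the composition reproduces the $(a,b+1)$ law and that $\mathbf{A}\to\mathbf{B}\to\mathbf{B}'$ is Markov.

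For part (e), however, the step you flagged as needing ``technical care'' is where the argument actually breaks. If the genie reveals $A_1$ together with its output position $P_1=p$, then (since positions $1,\dots,p-1$ are forced to be inserted zeros) the residual channel from $\mathbf{A}=(A_2,\dots,A_{a+1})$ to $(Y_{p+1},\dots,Y_{b+1})$ is a uniform-insertion channel with parameters $(a,\,b+1-p)$, \emph{not} $(a,b)$. Since $b+1-p\le b$ and, by part (d), $g(a,\cdot)$ is nonincreasing in its second argument, one has $g(a,b+1-p)\ge g(a,b)$, and your chain-rule bound only yields $\mathbb{I}(\mathbf{A}';\mathbf{B}')\le 1+\mathbb{E}_p\bigl[g(a,b+1-p)\bigr]$, which is weaker than the desired $1+g(a,b)$, not stronger. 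So the ``exactly an $(a,b)$ uniform-insertion channel'' claim is false, and the inequality comes out in the wrong direction. (The paper's own justification for (e) is a one-sentence genie-aided sketch of essentially the same flavor and does not resolve this either, so your instinct to worry about this step was correct; closing it requires an additional idea beyond revealing the position of a single input bit.) The $\phi$-statements, as you note, are immediate from $\phi=a-g$ once the $g$-statements are in hand.
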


\begin{proof}
See Appendix~\ref{app-proof-prop-auxil1}.
\end{proof}

\subsection{Genie-Aided System and Numerical Upper Bounds}
\label{subsec-genie}

In this section, we focus on upper bounds on the capacity of binary-input binary-output noiseless intermittent communication. The procedure is similar to~\cite{duman-outer-deletion1}. Specifically, we obtain upper bounds by giving some kind of side-information to the encoder and decoder, and calculating or upper bounding the capacity of this genie-aided channel.

Now we introduce one form of side-information. Assume that the position of the $[(s+1)i]^{th}$ codeword symbol in the output sequence is given to the encoder and decoder for all $i=1,2,...$ and a fixed integer number $s \ge 1$. We assume that the codeword length is a multiple of $s+1$, so that $t=k/(s+1)$ is an integer, and is equal to the total number of positions that are provided as side-information. This assumption does not impact the asymptotic behavior of the channel as $k \to \infty$. We define the random sequence $\{Z_i \}_{i=1}^t$ as follows: $Z_1$ is equal to the position of the $[s+1]^{th}$ codeword symbol in the output sequence, and for $i \in \{2,3,...,t\}$, $Z_i$ is equal to the difference between the positions of the $[(s+1)i]^{th}$ codeword symbol and $[(s+1)(i-1)]^{th}$ codeword symbol in the output sequence. Figure~\ref{F: outer-Zi} illustrate an example of the variable $Z_i$ for the case of $s=2$, where $c_i$ denotes the $i^{th}$ codeword symbol and $0$'s are the inserted noise symbols. 

\begin{figure}[t]
\centerline{\scalebox{.28}{\includegraphics{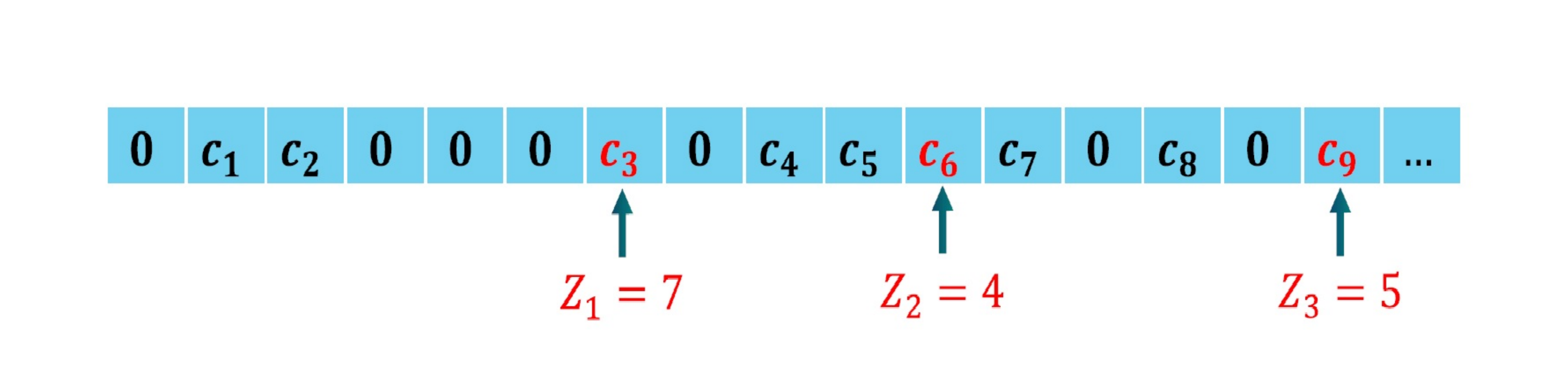}}}
\caption{An example of the variable $Z_i$ for the case of $s=2$, where $c_i$ denotes the $i^{th}$ codeword symbol.}
\label{F: outer-Zi}
\end{figure}

Since we assumed iid insertions, the random sequence $\{Z_i \}_{i=1}^t$ is iid as well with distribution:
\begin{equation} \label{distr-z-1}
P(Z_i=b+1)=\binom{b}{s}(1-p_t)^{b-s} p_t^{s+1}, b \ge s,
\end{equation}
with mean $\E {Z_i}=(s+1)/p_t$. Also, note that as $k \to \infty$, by the law of large numbers, we have
\begin{equation} \label{e-zz}
\frac{N}{t} \xrightarrow{\text{ } p \text{ }} \E {Z_i}=\frac{s+1}{p_t}.
\end{equation}

Let $C_1$ denote the capacity of the channel if we provide the encoder and decoder with side-information on the random sequence $\{Z_i \}_{i=1}^t$, which is clearly an upper bound on the capacity of the original channel. With this side-information, we essentially partition the transmitted and received sequences into $t$ contiguous blocks that are independent from each other. In the $i^{th}$ block the place of the $[(s+1)i]^{th}$ codeword symbol is given, which can convey one bit of information. Other than that, the $i^{th}$ block has $s$ input bits and $Z_i-1$ output bits with uniform $0$ insertions. Therefore, the information that can be conveyed through the $i^{th}$ block equals $g(s,Z_i-1)+1$. Thus, we have
\begin{align}
C_1=& \lim_{k \to \infty} \frac{1}{k} \sum_{i=1}^{t} g(s,Z_i-1)+1 \notag \\
=& \lim_{k \to \infty} \frac{N}{k} \frac{t}{N} \frac{1}{t} \sum_{i=1}^{t} g(s,Z_i-1)+1 \notag \\
=& \frac{1}{s+1} \lim_{t \to \infty} \frac{1}{t} \sum_{i=1}^{t} g(s,Z_i-1)+1 \label{cap1-h1} \\
=& \frac{1}{s+1} \E{g(s,Z_i-1)+1} \label{cap1-h2} \\
=& \frac{1}{s+1} \left[ 1+\sum_{b=s}^{\infty} \binom{b}{s}(1-p_t)^{b-s} p_t^{s+1} g(s,b) \right] \label{cap1-h3} \\
=& 1- \frac{1}{s+1} \sum_{b=s}^{\infty} \binom{b}{s}(1-p_t)^{b-s} p_t^{s+1} \phi(s,b), \label{cap1-h4}
\end{align}
where:~\eqref{cap1-h1} follows from~\eqref{e-zz};~\eqref{cap1-h2} follows from the law of large numbers;~\eqref{cap1-h3} follows from the distribution of $Z_i$'s given in~\eqref{distr-z-1}; and~\eqref{cap1-h4} follows from the definition~\eqref{e-auxiliary-cap2}. Note that the capacity $C_1$ cannot be larger than $1$, since the coefficients $\phi(\cdot,\cdot)$ cannot be negative. The negative term in~\eqref{cap1-h4} can be interpreted as a lower bound on the communication overhead as the cost of intermittency in the context of~\cite{overhead}.

The expression in~\eqref{cap1-h4} gives an upper bound on the capacity of the original channel with $p_t=1/\alpha$. However, it is infeasible to numerically evaluate the coefficients $\phi(s,b)$ for large values of $b$. As we discussed before, the largest value of $b$ for which we are able to evaluate the function $\phi(s,b)$ is $b_{max}=17$. The following upper bound on $C_1$ results by truncating the summation in~\eqref{cap1-h4} and using part (\ref{prop-aux-d}) of Proposition~\ref{prop-auxil1}.
\begin{equation}
C_1 \! \le 1- \frac{\phi(s,b_{max})}{s+1}  \!+\! \frac{1}{s\!+\!1}\!\sum_{b=s}^{b_{max}}\! \binom{b}{s} p_t^{s+1}(1\!-\!p_t)^{b-s} (\phi(s,b_{max})\!-\!\phi(s,b)), \label{cap2-h2}
\end{equation}
The expression~\eqref{cap2-h2}, which we denote by $C_1^\prime$, gives a nontrivial and computable upper bound for each value of $s=2,3,...,b_{max}-1$ on $C_1$, and therefore, an upper bound on the capacity of the original channel with $p_t=1/\alpha$. Figure~\ref{F: outer-2} shows the upper bounds for $b_{max}=17$ and $s=2,3,...,16$ versus the intermittency rate $\alpha$, along with the the achievability result.

\begin{figure}[t]
\centerline{\scalebox{.38}{\includegraphics{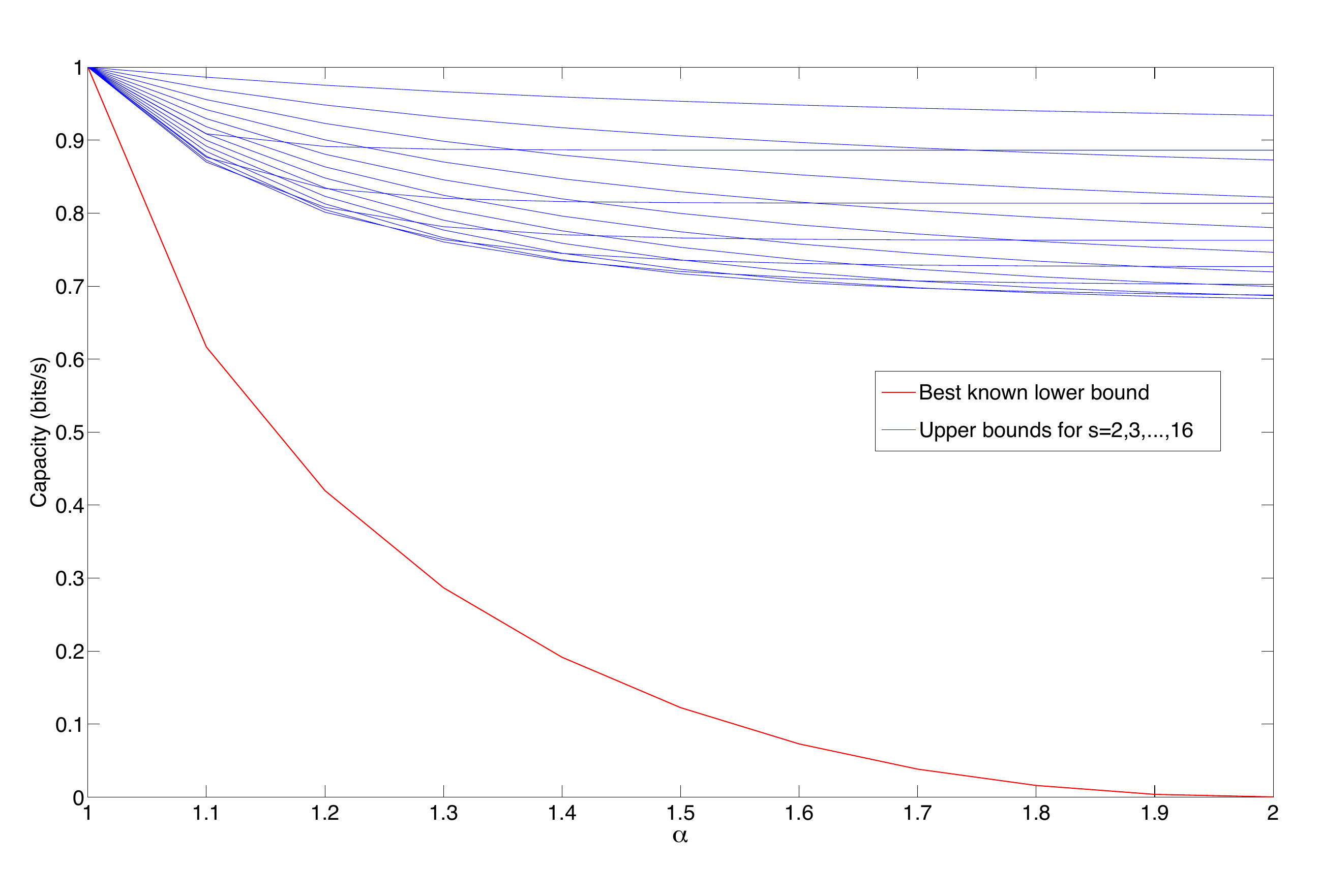}}}
\caption{Comparison between the best achievability result with different upper bounds obtained from~\eqref{cap2-h2} for $b_{max}=17$ and $s=2,3,...,16$, versus the intermittency rate $\alpha$.}
\label{F: outer-2}
\end{figure}

Next, we introduce a second form of side-information. Assume that for consecutive blocks of length $s$ of the output sequence, the number of codeword symbols within that block is given to the encoder and decoder as side-information, i.e., the number of codeword symbols in the sequence $(y_{(i-1)s+1} ,y_{(i-1)s+2},...,y_{is})$, $i=1,2,...$ for a fixed integer number $s \ge 2$. Let $C_2$ denote the capacity of the channel if we provide the encoder and decoder with this side-information. Using a similar procedure, we obtain
\begin{equation} \label{cap3}
C_2=1-\frac{1}{sp_t} \sum_{a=0}^s \binom{s}{a}p_t^a(1-p_t)^{s-a} \phi(a,s).
\end{equation}
Note that the summation in~\eqref{cap3} is finite, and we do not need to upper bound $C_2$ as we did for $C_1$. The value of $C_2$ gives nontrivial and computable upper bounds on the capacity of the original channel. Figure~\ref{F: outer-3} shows the upper bounds for $s=3,4,...,17$ versus the intermittency rate $\alpha$, along with the the achievability result. The upper bound corresponding to $s=17$ is tighter than others for all ranges of $\alpha$, i.e.,~\eqref{cap3} is decreasing in $s$. Intuitively, this is because by decreasing $s$, we provide the side-information more frequently, and therefore, the capacity of the resulting genie-aided system becomes larger.

\begin{figure}[t]
\centerline{\scalebox{.38}{\includegraphics{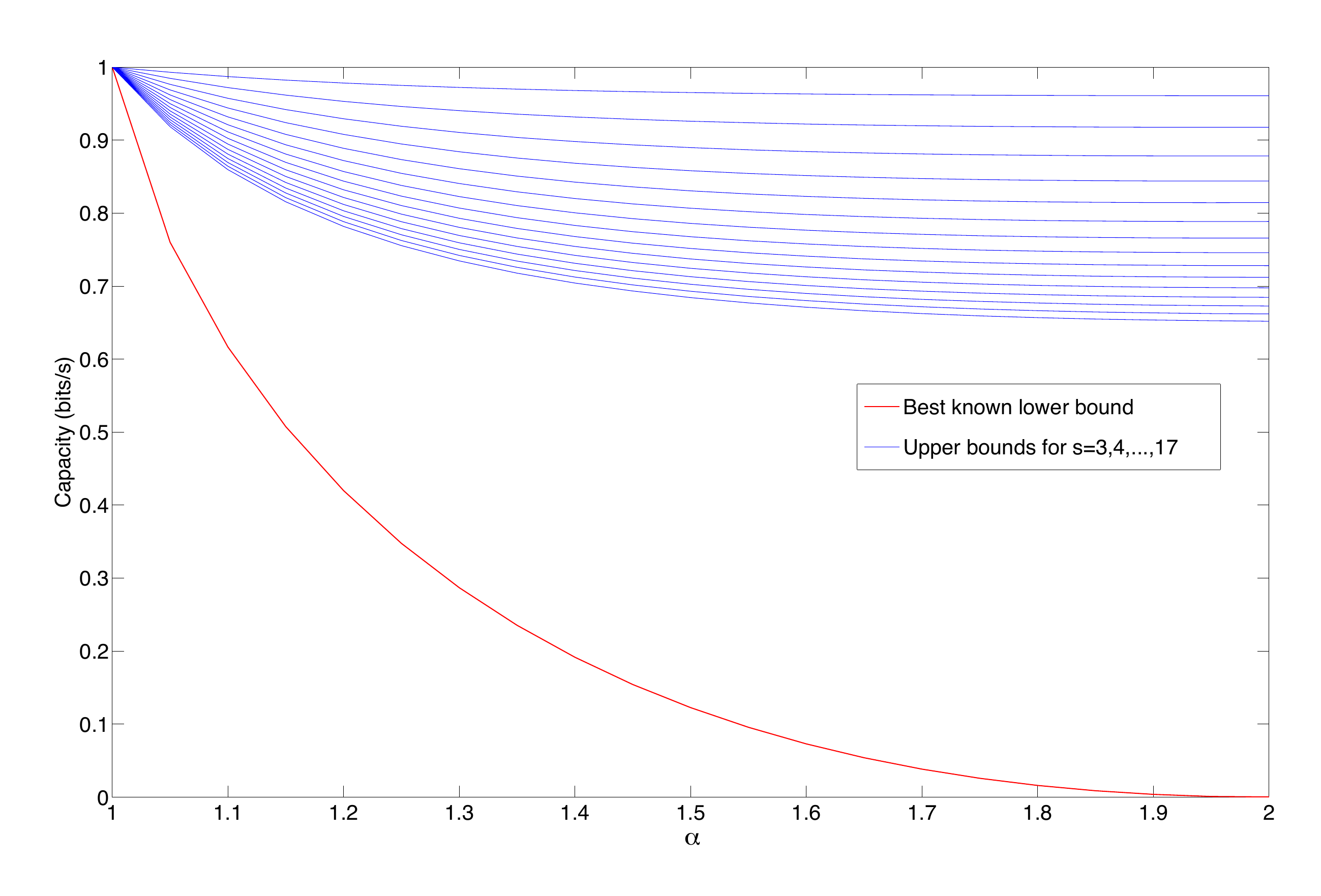}}}
\caption{Comparison between the best achievability result with different upper bounds obtained from~\eqref{cap3} for $s=3,4,...,17$, versus the intermittency rate $\alpha$.}
\label{F: outer-3}
\end{figure}

It seems that~\eqref{cap3} gives better upper bounds for the range of $\alpha$ shown in the figures ($1 < \alpha \le 2$). However, the other upper bound $C_1^\prime$ can give better results for the limiting values of $\alpha \to \infty$ or $p_t \to 0$. We have
\begin{align}
& \lim_{\alpha \to \infty} C_1^\prime=1-\frac{\phi(s,b_{max})}{s+1}, \label{e-lim-alpha-inf} \\
& \lim_{\alpha \to \infty} C_2=1. \notag
\end{align}
This is because of the fact that by increasing $\alpha$, and thus decreasing $p_t$, we have more zero insertions and the first kind of genie-aided system provides side-information less frequently leading to tighter upper bounds. The best upper bound for the limiting case of $\alpha \to \infty$ found by~\eqref{e-lim-alpha-inf} is $0.6307$ bits per channel use.

Although the gap between the achievable rates and upper bounds is not particularly tight, especially for large values of intermittency rate $\alpha$, the upper bounds suggest that the linear scaling of the receive window with respect to the codeword length considered in the system model is natural since there is a tradeoff between the capacity of the channel and the intermittency rate. By contrast, in asynchronous communication~\cite{asynch-wornell-1, asynch-wornell}, where the transmission of the codeword is contiguous, only exponential scaling $n=e^{\alpha k}$ induces a tradeoff between capacity and asynchronism.

\section{Conclusion}
\label{sec: conclusion}
We formulated a model for intermittent communication that can capture bursty transmissions or a sporadically available channel by inserting a random number of silent symbols between each codeword symbol so that the receiver does not know a priori when the transmissions will occur. We introduced decoding from pattern detection in order to develop achievable rates using partial divergence and its properties. As the system becomes more intermittent, the achievable rates decrease due to the additional uncertainty about the positions of the codeword symbols at the decoder. For the case of binary-input binary-output noiseless channel, we obtained upper bounds on the capacity of intermittent communication by providing the encoder and the decoder with various amounts of side-information, and calculating or upper bounding the capacity of this genie-aided system. The results suggest that the linear scaling of the receive window with respect to the codeword length considered in the system model is relevant since the upper bounds imply a tradeoff between the capacity and the intermittency rate.

\appendices

\section{Proof of Proposition~\ref{prop1}}
\label{app-proof-prop1}

It can readily be verified that the function $d(P,Q,Q^\prime,\rho)$ in~\eqref{eq-partial-lemma} can be written as~\cite{mostafa-isit-12}
\begin{equation}
d(P,Q,Q^\prime,\rho)=H(P)+D(P\|Q) -\bar{\rho}\log\frac{q_{|\mathcal{X}|}^\prime}{q_{|\mathcal{X}|}}-e(P,Q,Q^\prime,\rho), \label{E: d}
\end{equation}
where
\begin{align}
&e(P,Q,Q^\prime,\rho):=\max_{0 \le \theta_j \le 1, j=1,2,...,|\mathcal{X}|-1} \{\rho H(P_1)+\bar{\rho} H(P_2) +\sum_{j=1}^{|\mathcal{X}|-1}\theta_j p_j\log a_j \}, \label{E: e} \\
& a_j:=\frac{q_j^\prime q_{|\mathcal{X}|}}{q_j q_{|\mathcal{X}|}^\prime}, \quad j=1,2,...,|\mathcal{X}|-1, \label{E: aj} \\
&P_1:=(\frac{\bar{\theta}_1 p_1}{\rho},\frac{\bar{\theta}_2 p_2}{\rho},...,\frac{\bar{\theta}_{|\mathcal{X}|-1} p_{|\mathcal{X}|-1}}{\rho},1-\frac{\sum_{j=1}^{|\mathcal{X}|-1} \bar{\theta}_j p_j}{\rho}), \label{E: p1} \\
&P_2:=(\frac{\theta_1 p_1}{\bar{\rho}},\frac{\theta_2 p_2}{\bar{\rho}},...,\frac{\theta_{|\mathcal{X}|-1} p_{|\mathcal{X}|-1}}{\bar{\rho}},1-\frac{\sum_{j=1}^{|\mathcal{X}|-1} \theta_j p_j}{\bar{\rho}}). \label{E: p2}
\end{align}

Now, we prove the proposition by solving the optimization problem in~\eqref{E: e} and simplifying~\eqref{E: d} for the special case of $Q^\prime=P$. We would like to find the optimal $\Theta:=(\theta_1,\theta_2,...,\theta_{|\mathcal{X}|-1})$ for which the function $$\tilde{e}(P,Q,\rho,\Theta):=\rho H(P_1)+\bar{\rho} H(P_2)+\sum_{j=1}^{|\mathcal{X}|-1}\theta_j p_j\log a_j$$ is maximized subject to the constraints
\begin{align}
&0 \le \theta_j \le 1, \quad j=1,2,...,|\mathcal{X}|-1, \label{E: constraint1} \\
&\sum_{j=1}^{|\mathcal{X}|-1} \bar{\theta}_j p_j \le \rho, \label{E: constrain2} \\
&\sum_{j=1}^{|\mathcal{X}|-1} \theta_j p_j \le \bar{\rho}, \label{E: constrain3}
\end{align}
where~\eqref{E: constrain2} and~\eqref{E: constrain3} arise from the fact that $P_1$ and $P_2$ should be probability mass functions, respectively. Note that $\tilde{e}$ is concave in $\Theta$, because $H(P)$ is a concave function in $P$, and $P_1$ and $P_2$ are linear functions in $\Theta$. Therefore, the optimal $\Theta$ can be found by setting the derivative of $\tilde{e}$ with respect to $\theta_j$'s, $j=1,2,...,|\mathcal{X}|-1$ to zero, and verifying that the solution satisfies the constraints~\cite{boyd2004convex}. We have
\begin{equation} \notag
\frac{\partial \tilde{e}}{\partial \theta_j}=p_j \log \frac{\theta_j(\rho-\sum_{j=1}^{|\mathcal{X}|-1} \bar{\theta}_j p_j)}{\bar{\theta}_j(\bar{\rho}-\sum_{j=1}^{|\mathcal{X}|-1} \theta_j p_j) a_j}=0,  \quad j=1,2,...,|\mathcal{X}|-1.
\end{equation}
Therefore, we have
\begin{equation} \label{E: set der zero}
\frac{\bar{\theta}_j}{\theta_j} \frac{p_j}{q_j}=c^*=\frac{p_{|\mathcal{X}|}}{q_{|\mathcal{X}|-1}}\frac{\rho-\sum_{j=1}^{|\mathcal{X}|-1} \bar{\theta}_j p_j}{\bar{\rho}-\sum_{j=1}^{|\mathcal{X}|-1} \theta_j p_j}, \quad j=1,2,...,|\mathcal{X}|-1,
\end{equation}
where $c^*$ is defined to be equal to the right term in~\eqref{E: set der zero} since it is fixed for all $j$'s. From the left equality in~\eqref{E: set der zero}, the optimal solution is characterized by
\begin{equation} \label{E: opt theta}
\theta_j^*=\frac{p_j}{c^* q_j+p_j}, \quad j=1,2,...,|\mathcal{X}|-1,
\end{equation}
and from the right equality in~\eqref{E: set der zero}, $c^*$ is found to be the solution to~\eqref{E: const c}. Now, we check that this solution satisfies the constraints. Let $g(c):=c \sum_{j=1}^{|\mathcal{X}|} \frac{p_j q_j}{c q_j+p_j}-\rho$ be a function of $c$ for a fixed $P$, $Q$, and $\rho$ whose root gives the value of $c^*$. Note that the equation $g(c)=0$ cannot have more than one solutions, otherwise, the concave and differentiable function $\tilde{e}$ would have more than one local maximums, which is a contradiction. Also, note that
\begin{align}
\lim_{c \to 0} g(c)&=-\rho \le 0, \notag \\
\lim_{c \to \infty} g(c)&=1-\rho \ge 0, \notag
\end{align}
and because the function $g(c)$ is continuous in $c$, the root of this function is unique and is non-negative. Therefore, $c^* \ge 0$ is the unique solution to~\eqref{E: const c}. From~\eqref{E: opt theta} and the fact that $c^* \ge 0$, the constraint~\eqref{E: constraint1} is satisfied. From the right equality in~\eqref{E: set der zero} and the fact that $c^* \ge 0$, we observe that
$$\frac{\rho-\sum_{j=1}^{|\mathcal{X}|-1} \bar{\theta}_j p_j}{\bar{\rho}-\sum_{j=1}^{|\mathcal{X}|-1} \theta_j p_j} \ge 0.$$
Also, note that numerator and denominator of the above expression cannot simultaneously be negative, and therefore, both constraints~\eqref{E: constrain2} and~\eqref{E: constrain3} are satisfied as well.

Using the optimal solution obtained above, the elements of the PMF's $P_1$ and $P_2$ in~\eqref{E: p1} and~\eqref{E: p2}, respectively, are obtained as
\begin{align}
&p_{1,j}=\frac{p_j q_j}{c^*q_j+p_j}/\sum_{j=1}^{|\mathcal{X}|} \frac{p_j q_j}{c^*q_j+p_j}, \notag \\
&p_{2,j}=\frac{p_j^2}{c^*q_j+p_j}/\sum_{j=1}^{|\mathcal{X}|} \frac{p_j^2}{c^*q_j+p_j}, \notag
\end{align}
for $j=1,2,...,|\mathcal{X}|$. Substituting the optimal solution into~\eqref{E: e} and then into~\eqref{E: d},~\eqref{E: explicit d} can be obtained after some manipulation.

\section{Proof of Proposition~\ref{prop2}}
\label{app-proof-prop2}

 \begin{enumerate}[(a)]

 \item From~\eqref{E: const c}, we observe that $c^* \to 0$ as $\rho \to 0$. Therefore, from~\eqref{E: explicit d}, we have
 \begin{equation} \notag
d_0(P\|Q)=D(P\|Q)-\sum_{j=1}^{|\mathcal{X}|}p_j \log \frac{p_j}{q_j} +c^* \log c^*+h(0)=0,
\end{equation}
for $c^* \to 0$.

 \item From~\eqref{E: const c}, we observe that $c^* \to \infty$ as $\rho \to 1$. Therefore, from~\eqref{E: explicit d}, we have
 \begin{equation} \notag
\!\!\!\!\!d_1(P\|Q)\!=\!D(P\|Q)-\sum_{j=1}^{|\mathcal{X}|}p_j \log (1+\frac{p_j}{c^*q_j}) +h(1)\!=\!D(P\|Q),
\end{equation}
for $c^* \to \infty$.

 \item If $P=Q$, then~\eqref{E: const c} simplifies to $\frac{c^*}{c^*+1}=\rho$, and therefore $c^*=\rho/\bar{\rho}$. By substituting $c^*$ into~\eqref{E: explicit d} and because $D(P\|P)=0$, we obtain
 \begin{equation} \notag
 d_\rho(P\|P)=-\log\left(\frac{\rho}{\bar{\rho}}+1\right)+\rho \log \frac{\rho}{\bar{\rho}} +h(\rho)=0.
 \end{equation}

 \item By taking the derivative of~\eqref{E: explicit d} with respect to $\rho$, we obtain
 \begin{align}
\!\!d_\rho^\prime(P\|Q)\!\!&=\!-\frac{\partial c^*}{\partial \rho} \sum_{j=1}^{|\mathcal{X}|} \frac{p_j q_j}{c^*q_j+p_j}+\log c^*+\frac{\partial c^*}{\partial \rho} \frac{\rho}{c^*}+\log\frac{\bar{\rho}}{\rho} \notag \\
& = \frac{\partial c^*}{\partial \rho} (\frac{\rho}{c^*} - \sum_{j=1}^{|\mathcal{X}|} \frac{p_j q_j}{c^*q_j+p_j})+\log (c^* \frac{\bar{\rho}}{\rho}) \notag \\
&= \log (c^* \frac{\bar{\rho}}{\rho}), \label{E: derv partial diver}
 \end{align}
 where~\eqref{E: derv partial diver} is obtained by using~\eqref{E: const c}. Therefore,
 \begin{equation} \notag
 d_0^\prime(P\|Q)=\lim_{\rho \to 0} d_\rho^\prime(P\|Q)=\lim_{\rho \to 0} \log (c^* \frac{\bar{\rho}}{\rho})=0,
\end{equation}
because we have $\lim_{\rho \to 0} \frac{\rho}{c^*}=\sum_{j=1}^{|\mathcal{X}|}\frac{p_j q_j}{p_j}=1$ from~\eqref{E: const c}.

\item According to~\eqref{E: derv partial diver}, in order to prove $d_\rho^\prime(P\|Q) > 0,  0 < \rho \le 1$, it is enough to show that $\frac{\rho}{\bar{\rho}} < c^*, 0 < \rho \le 1$:

\begin{align}
1&=\left(\sum_{j=1}^{|\mathcal{X}|} p_j\right)^2 \notag \\
&< \sum_{j=1}^{|\mathcal{X}|} (\sqrt{c^*q_j+p_j})^2 \cdot \sum_{j=1}^{|\mathcal{X}|} (\frac{p_j}{\sqrt{c^*q_j+p_j}})^2 \label{E: cauchy-1} \\
&=(c^*+1) \sum_{j=1}^{|\mathcal{X}|} \frac{p_j^2}{c^*q_j+p_j} \notag \\
&=(c^*+1) \bar{\rho}, \label{E: bar rho}
\end{align}
where~\eqref{E: cauchy-1} follows from the Cauchy-Schwarz inequality, and~\eqref{E: bar rho} is true because
\begin{align}
\bar{\rho}&=1-\rho=\sum_{j=1}^{|\mathcal{X}|} p_j - \rho \notag \\
&= \sum_{j=1}^{|\mathcal{X}|} p_j-\sum_{j=1}^{|\mathcal{X}|} \frac{c^* p_j q_j}{c^* q_j+p_j} \label{E: ineq111} \\
&=\sum_{j=1}^{|\mathcal{X}|} \frac{p_j^2}{c^*q_j+p_j}, \notag
\end{align}
where~\eqref{E: ineq111} follows from~\eqref{E: const c}. Note that the Cauchy-Schwarz inequality in~\eqref{E: cauchy-1} cannot hold with equality for $0 <\rho$ (and therefore, for $0<c^*$), because otherwise, $p_j=q_j, j=1,2,...,|\mathcal{X}|$, and $P=Q$. From~\eqref{E: bar rho}, $1 < (c^*+1) \bar{\rho}$, which results in the desirable inequality $\frac{\rho}{\bar{\rho}} < c^*$.

\item By taking the derivative of~\eqref{E: derv partial diver} with respect to $\rho$, it can be seen that
\begin{equation} \label{E: sec derv partial}
d_\rho^{\prime\prime}(P\|Q)=\frac{1}{c^*}\frac{\partial c^*}{\partial \rho} -\frac{1}{\rho \bar{\rho}}.
\end{equation}
Also, by taking the derivative of~\eqref{E: const c} with respect to $\rho$ and after some calculation, we have
\begin{equation} \label{E: derv of c}
\sum_{j=1}^{|\mathcal{X}|}\frac{(c^*q_j)^2 p_j}{(c^*q_j+p_j)^2}=\rho-\frac{c^*}{\frac{\partial c^*}{\partial \rho}}.
\end{equation}
Therefore,
\begin{align}
\rho^2&=\left( \sum_{j=1}^{|\mathcal{X}|} \frac{c^* p_j q_j}{c^* q_j+p_j} \right)^2 \notag \\
&<\sum_{j=1}^{|\mathcal{X}|}\frac{(c^*q_j)^2 p_j}{(c^*q_j+p_j)^2} \cdot \sum_{j=1}^{|\mathcal{X}|} p_j \label{E: cauchy-2} \\
&=\rho-\frac{c^*}{\frac{\partial c^*}{\partial \rho}}, \label{E: to get convx}
\end{align}
where~\eqref{E: cauchy-2} follows from the Cauchy-Schwarz inequality, which cannot hold with equality since otherwise $P=Q$, and where~\eqref{E: to get convx} follows from~\eqref{E: derv of c}. From~\eqref{E: to get convx}, $\rho^2 < \rho-\frac{c^*}{\frac{\partial c^*}{\partial \rho}}$, which implies that $d_\rho^{\prime\prime}(P\|Q)>0$ according to~\eqref{E: sec derv partial}.

\item From part~\eqref{e}, $d_\rho(P\|Q)$ is convex in $\rho$, and therefore, $\frac{d_\rho(P\|Q)}{\rho}$ is increasing in $\rho$. In addition, from part~\eqref{c}, $\lim_{\rho \to 0} \frac{d_\rho(P\|Q)}{\rho}=d_0^\prime(P\|Q)=0$, and from part~\eqref{b}, $\lim_{\rho \to 1} \frac{d_\rho(P\|Q)}{\rho}=D(P\|Q)$. Consequently, $0 \le d_\rho(P\|Q) \le \rho D(P\|Q)$.

 \end{enumerate}

\section{Proof of Theorem~\ref{th-achv2-basic}}
\label{app-proof-th-achv2-basic}

Fix the input distribution $P$, and consider decoding from pattern detection described in Section~\ref{subsec-decode-algorithm}. For any $\epsilon >0$, we prove that if $R=\mathbb{I}(P,W)-f(P,W,\alpha)-2\epsilon$, then the average probability of error, $p_e^{avg}$, vanishes as $k \to \infty$. We have
\begin{equation} \label{E: union1}
p_e^{avg} = \mathbb{P} (e|m=1) \le  \mathbb{P}(\hat{m} \in \{2,3,...,M\} |m=1) + \mathbb{P}(\hat{m}=e|m=1),
\end{equation}
where the equality in~\eqref{E: union1} follows from~\cite[(7.70)]{Cover}, and the inequality follows from the union bound in which the last term is the probability that the decoder declares an error (does not find any message) at the end of all $\binom{n}{k}$ choices, which implies that even if we pick the correct output symbols, the decoder either does not pass the first stage or does not declare $m=1$ in the second stage. Therefore,
\begin{align}
\mathbb{P}(\hat{m}=e|m=1) \le & \mathbb{P}(Y^k \notin T_{[PW]_\mu}) +\mathbb{P}(Y_\star^{n-k} \notin T_{[W_\star]_\mu}) +\mathbb{P}(Y^k \notin T_{[W]_\mu}(C^k(1))) \label{E: secterm} \\
& \to 0, \text{ as } k \to \infty, \label{E: secterm1}
\end{align}
where $Y^k$ is the output of the channel if the input is $C^k(1)$, and $Y_\star$ is the output of the channel if the input is the noise symbol, and where we use the union bound to establish~\eqref{E: secterm}. The limit~\eqref{E: secterm1} follows because all the three terms in~\eqref{E: secterm} vanish as $k \to \infty$ according to Lemma~\ref{fact-typ2}.

The first term after the inequality in~\eqref{E: union1} is more challenging. It is the probability that for at least one choice of the output symbols, the decoder passes the first stage and then the typicality decoder declares an incorrect message. We characterize the $\binom{n}{k}$ choices based on the number of incorrectly chosen output symbols, i.e., the number of symbols in $\tilde{y}^k$ that are in fact output symbols corresponding to a noise symbol, which is equal to the number of symbols in $\hat{y}^{n-k}$ that are in fact output symbols corresponding to a codeword symbol. For any $0 \le k_1 \le \max\{k,n-k\}$, there are $\binom{k}{k_1} \binom{n-k}{k_1}$ choices. \footnote{According to Vandermonde's identity, we have $ \mathsmaller{\sum_{k_1=0}^{\max\{k,n-k\}} \binom{k}{k_1} \binom{n-k}{k_1} = \binom{n}{k}}$.} Using the union bound for all the choices and all the messages $\hat{m} \ne 1$, we have
\begin{equation}
\mathbb{P}(\hat{m} \in  \{2,3,...,M\}  |m=1) \le (e^{kR}-1) \sum_{k_1=0}^{\max\{k,n-k\}} \binom{k}{k_1} \binom{n-k}{k_1} \mathbb{P}_{k_1}(\hat{m} =2 |m=1), \label{E: eq1}
\end{equation}
where $\mathbb{P}_{k_1}(\hat{m} =2 |m=1)$ in~\eqref{E: eq1} denotes the probability that the decoder declares $\hat{m} =2$ conditioned on the fact that message $m=1$ is sent and the number of incorrectly chosen output symbols in decoding from pattern detection described in Section~\ref{subsec-decode-algorithm} is equal to $k_1$. Note that message $\hat{m} =2$ is declared at the decoder only if it passes the first and the second stage. Therefore,
\begin{align}
&\mathbb{P}_{k_1}(\hat{m} =2 | m=1) \notag \\
&=\mathbb{P}_{k_1} \bigg(\{\tilde{Y}^k \in T_{[PW]_\mu}\}  \cap \{\hat{Y}^{n-k} \in T_{[W_\star]_\mu}\} \cap \{\tilde{Y}^k \in T_{[W]_\mu}(C^k(2))\}|m=1\bigg) \notag \\
&=\mathbb{P}_{k_1}(\tilde{Y}^k \in T_{[PW]_\mu})\cdot \mathbb{P}_{k_1}(\hat{Y}^{n-k} \in T_{[W_\star]_\mu}) \mathsmaller{\cdot \mathbb{P}(\tilde{Y}^k \in T_{[W]_\mu}(C^k(2))|m\!=\!1,\! \tilde{Y}^k \! \in \! T_{[PW]_\mu},\! \hat{Y}^{n-k} \! \in \!T_{[W_\star]_\mu})} \label{E: eq2} \\
& \le e^{o(k)}e^{-k d_{k_1 /k}(PW\|W_\star)}e^{-(n-k) d_{k_1 /(n-k)}(W_\star\|PW)}  e^{-k(\mathbb{I}(P,W)-\epsilon)}, \label{E: eq3}
\end{align}
where:~\eqref{E: eq2} follows from the independence of the events $\{\tilde{Y}^k \in T_{[PW]_\mu}\}$ and $\{\hat{Y}^{n-k} \in T_{[W_\star]_\mu}\}$ conditioned on $k_1$ incorrectly chosen output symbols; and~\eqref{E: eq3} follows from using Lemma~\ref{lemma1} for the first two terms in~\eqref{E: eq2} with mismatch ratios $k_1 /k$ and $k_1 /(n-k)$, respectively, and using Lemma~\ref{fact1} for the last term in~\eqref{E: eq2}, because conditioned on message $m=1$ being sent, $C^k(2)$ and $\tilde{Y}^k$ are independent regardless of the other conditions in the last term. Substituting~\eqref{E: eq3} into the summation in~\eqref{E: eq1}, we have
\begin{align}
&\mathbb{P}(\hat{m} \in \{2,3,...,M\} |m=1) \notag \\
&\le e^{o(k)} (e^{kR}-1) e^{-k(\mathbb{I}(P,W)-\epsilon)}  \sum_{k_1=0}^{\max\{k,n-k\}} \!\! \binom{k}{k_1} \!\!\binom{n-k}{k_1} \!e^{-k d_{k_1 /k}(PW\|W_\star)-(n-k)d_{k_1 /(n-k)}(W_\star\|PW)} \label{E: eq4} \\
& \le e^{o(k)} e^{kR} e^{-k(\mathbb{I}(P,W)-\epsilon)} e^{k f(P,W,\alpha)} \label{E: eq5} \\
& = e^{o(k)} e^{-k \epsilon}  \quad \to 0 \text{ as } k \to \infty, \label{E: eq6}
\end{align}
where:~\eqref{E: eq6} is obtained by substituting $R=\mathbb{I}(P,W)-f(P,W,\alpha)-2\epsilon$; and~\eqref{E: eq5} is obtained by finding the exponent of the sum in~\eqref{E: eq4} as follows
\begin{align}
&\lim_{k \to \infty} \frac{1}{k} \log \sum_{k_1=0}^{\max\{k,n-k\}} \mathsmaller{\binom{k}{k_1} \!\binom{n-k}{k_1}} e^{\mathsmaller{-k d_{k_1/k}(PW\|W_\star)-(n-k)d_{k_1 /(n-k)}(W_\star\|PW)}} \notag \\
&=\lim_{k \to \infty} \!\frac{1}{k} \!\log\! \sum_{k_1=0}^{\max\{k,n-k\}}\! \exp\{\mathsmaller{\!k h(\frac{k_1}{k})\!\!+\!\!(n-k) h(\frac{k_1}{n-k}) \!-\!k d_{\frac{k_1}{k}}(PW\|W_\star)}  \mathsmaller{-(n-k)d_{\frac{k_1}{n-k}}(W_\star\|PW)\}} \label{E: block2-stir} \\
& =\!\lim_{k \to \infty} \!\frac{1}{k} \max_{k_1=0,...,\max\{k,n-k\}} \!\{\mathsmaller{\!k h(\frac{k_1}{k})\!+\!(n-k) h(\frac{k_1}{n-k}) \!-\!k d_{\frac{k_1}{k}}(PW\|W_\star)}  \mathsmaller{-(n-k)d_{\frac{k_1}{n-k}}(W_\star\|PW)\}} \label{E: block2-stir-2} \\
& \le \max_{0 \le \beta \le \max\{1,\frac{1}{\alpha-1}\}} \!\{(\alpha-1)h(\beta)+ h((\alpha-1)\beta) -\!d_{(\alpha-\!1) \beta}(PW \| W_\star)  -(\alpha\!-\!1)d_\beta(W_\star\|PW) \} \label{E: block2-stir-3}  \\
&=f(P,W,\alpha), \label{E: block2-stir-4}
\end{align}
where:~\eqref{E: block2-stir} follows by using Stirling's approximation for the binomial terms;~\eqref{E: block2-stir-2} follows by noticing that the exponent of the summation is equal to the largest exponent of each term in the summation, since the number of terms is polynomial in $k$;~\eqref{E: block2-stir-3} is obtained by letting $\beta:=k_1/(n-k)$ ($0 \le \beta \le \max\{1,\frac{1}{\alpha-1}\}$) and substituting  $n=\alpha k$; and~\eqref{E: block2-stir-4} follows from the definition~\eqref{E: basic-f}.

Now, combining~\eqref{E: union1},~\eqref{E: secterm1}, and~\eqref{E: eq6}, we have $p_e^{avg}\to 0$ as $k \to \infty$, which proves the Theorem.

\section{Proof of Proposition~\ref{prop: overheadf-basic}}
\label{app-proof-prop: overheadf-basic}

\begin{enumerate}[(a)]
\item The term $(\alpha-1)h(\beta)+ h((\alpha-1)\beta)$ in~\eqref{E: basic-f} is maximized at $\beta=1/\alpha$, because it is concave in $\beta$ and its derivative with respect to $\beta$ is zero at $1/\alpha$. Thus, this term is decreasing in $\beta$ in the interval $[1/\alpha,1]$. Also, note that the partial divergence terms in~\eqref{E: basic-f} are increasing with respect to $\beta$ according to Proposition~\ref{prop2}~(\ref{d}). Therefore, the term in the max operator in~\eqref{E: basic-f} is decreasing in $\beta$ in the interval $[1/\alpha,1]$, and the maximum occurs in the interval $[0,1/\alpha]$.
\item The term in the max operator in~\eqref{E: basic-f} is concave in $\beta$, because $h(\beta)$ is concave in $\beta$ and $d_{\beta}(\cdot\|\cdot)$ is convex in $\beta$ according to Proposition~\ref{prop2}~(\ref{e}). Therefore, the term is maximized at a point $\beta^*$ where the derivative with respect to $\beta$ is equal to zero. We then have
    \begin{equation} \label{E: maxbf}
    \log\frac{1-\beta^*}{\beta^*}+\log\frac{1-(\alpha-1)\beta^*}{(\alpha-1)\beta^*} -\log c_1\frac{1-(\alpha-1)\beta^*}{(\alpha-1)\beta^*}-\log c_2\frac{1-\beta^*}{\beta^*}=0,
    \end{equation}
    where:~\eqref{E: derv partial diver} is used to derive~\eqref{E: maxbf}; and $c_1$ and $c_2$ are the corresponding $c^*$'s in~\eqref{E: const c} for the two partial divergence terms in~\eqref{E: basic-f}. Taking derivative of~\eqref{E: basic-f} with respect to $\alpha$ assuming that the maximum occurs at $\beta^*$, we obtain
    \begin{align}
    &\frac{\partial f(P,W,\alpha)}{\partial \alpha} \notag \\
    &= \frac{\partial \beta^*}{\partial \alpha} (\alpha\!-\!1) (\cdot) \!+\!h(\beta^*)\!-\! d_{\beta^*}(W_\star\|PW) \notag \\
    & \quad +\beta^*(\log\frac{1-(\alpha-1)\beta^*}{(\alpha-1)\beta^*}-\log c_1\frac{1-(\alpha-1)\beta^*}{(\alpha-1)\beta^*}) \notag \\
    &=\!h(\beta^*)\!-\! d_{\beta^*}\!(W_\star\|PW)\!+\!\beta^*(\log\! c_2\frac{1-\beta^*}{\beta^*}\!-\!\log\!\frac{1-\beta^*}{\beta^*}\!) \label{E: prop3-1} \\
    &=\!-\log(1-\beta^*)\!+\!\!\left(\!\frac{\partial d_{\beta^*}\!(W_\star\|PW)}{\partial \beta^*}\!-\! d_{\beta^*}\!(W_\star\|PW)\!\!\right) \label{E: prop3-2} \\
    & \ge 0, \label{E: prop3-3}
    \end{align}
    where: $(\cdot)$ in the first line is the left-side of~\eqref{E: maxbf}, which is equal to zero;~\eqref{E: prop3-1} follows from~\eqref{E: maxbf}, and~\eqref{E: prop3-2} follows from~\eqref{E: derv partial diver}; and~\eqref{E: prop3-3} follows from the fact that $-\log(1-\beta^*)$ is always positive for $0 \le \beta^* \le 1$ and $\partial d_{\beta^*}(W_\star\|PW)/\partial \beta^*- d_{\beta^*}(W_\star\|PW)$ is also always positive, because the partial divergence $d_{\beta^*}(W_\star\|PW)$ is convex in $\beta^*$ according to Proposition~\ref{prop2}~(\ref{e}).
\item Substituting $\alpha=1$ in~\eqref{E: basic-f}, all the terms would be zero, because $h(0)=0$ and $d_0(P\|Q)=0$ according to Proposition~\ref{prop2}~(\ref{a}).
\item Consider that the maximum in~\eqref{E: basic-f} occurs at $\beta^*$. According to part (\ref{prop3-a-basic}), $0 \le \beta^* \le 1/\alpha$, and therefore, $\beta^* \to 0$ as $\alpha \to \infty$. Using Proposition~\ref{prop2}~(\ref{c}) and~\eqref{E: maxbf}, we have $\alpha \beta^* \to 1$ as $\alpha \to \infty$. Substitiuting $\alpha=1/\beta^*$ in~\eqref{E: basic-f}, we obtain
    \begin{align}
    &\lim_{\alpha \to \infty}  f(P,W,\alpha) \notag \\
    &=\lim _{\beta^* \to 0} \!\frac{h(\beta^*)}{\beta^*}\!+\! h(1)\!-\! d_{1}(PW \| W_\star)\!-\!\frac{d_{\beta^*}(W_\star\|PW)}{\beta^*} \notag \\
    &=\lim _{\beta^* \to 0} \frac{h(\beta^*)}{\beta^*}- D(PW \| W_\star)-d_0^\prime(P\|Q)    \label{E: prop3-bbb} \\
    &=\lim _{\beta^* \to 0} \frac{h(\beta^*)}{\beta^*}- D(PW \| W_\star)  \label{E: prop3-bbb-2} \\
    & \to \infty,  \label{E: prop3-bbb-3}
    \end{align}
    where:~\eqref{E: prop3-bbb} follows from Proposition~\ref{prop2}~(\ref{b});~\eqref{E: prop3-bbb-2} follows from Proposition~\ref{prop2}~(\ref{c}); and~\eqref{E: prop3-bbb-3} follows from the definition of the binary entropy function and the assumption that  $D(PW\|W_\star)$ is finite.
\end{enumerate}

\section{Proof of Proposition~\ref{prop-auxil1}}
\label{app-proof-prop-auxil1}

We prove the properties for the capacity function $g(a,b)$. The corresponding properties for the function $\phi(a,b)$ easily follows from~\eqref{e-auxiliary-cap2}.
\begin{enumerate}[(a)]
\item Since the cardinality of the input alphabet of this channel is $2^a$, the capacity of this channel is at most $a$ bits per channel use.
\item There are no insertions. Therefore, it is a noiseless channel with input and output alphabets of sizes $2^a$ and capacity $a$ bits per channel use.
\item The input alphabet is $\{0,1\}$, and the output consists of binary sequences with length $b$ and weight $0$ or $1$, because only $0$'s can be inserted in the sequence. Considering all the output sequences with weight $1$ as a super-symbol, the channel becomes binary noiseless with capacity $1$ bits per channel use.
\item The capacity $g(a,b+1)$ cannot decrease if, at each channel use, the decoder knows exactly one of the positions at which an insertion occurs, and the capacity of the channel with this genie-aided encoder and decoder becomes $g(a,b)$. Therefore, $g(a,b+1) \le g(a,b)$.
\item The capacity $g(a+1,b+1)$ cannot decrease if, at each channel use, the encoder and decoder know exactly one of the positions at which an input bit remains unchanged, so that it can be transmitted uncoded and the capacity of the channel with this genie-aided encoder and decoder becomes $1+g(a,b)$. Therefore, $g(a+1,b+1) \le 1+g(a,b)$.
\end{enumerate}

\section*{Acknowledgment}
The authors wish to thank Dr. Aslan Tchamkerten and Shyam Kumar for the useful discussion, which led to a more compact expression and a shorter proof for the result in Lemma~\ref{lemma1}.


\end{document}